\newtheorem{theorem}{Theorem}
\newtheorem{lemma}[theorem]{Lemma}
\theoremstyle{definition} 
\newtheorem{definition}[theorem]{Definition}
\theoremstyle{remark}
\newcommand{\bbbt}{\mathbb{T}}
\newcommand{\bbbn}{\mathbb{N}}
\newcommand{\ce}{\colonequals}
\newcommand{\cce}{\coloncolonequals}
\newcommand{\notni}{\not\owns}
\newcommand{\gevents}[1][]{\mathit{GEvents}_{#1}}
\newcommand{\bevents}[1][]{\mathit{BEvents}_{#1}}
\newcommand{\fevents}[1][]{\mathit{FEvents}_{#1}}
\newcommand{\sevents}[1][]{\mathit{SysEvents}_{#1}}
\newcommand{\gtrueevents}[1][]{\overline{\mathit{GEvents}}_{#1}}
\newcommand{\gtrueactions}[1][]{\overline{\mathit{GActions}}_{#1}}
\newcommand{\gactions}[1][]{\gtrueactions[#1]}
\newcommand{\gthings}[1][]{\ghaps[#1]}
\newcommand{\gtruethings}[1][]{\overline{\ghaps[#1]}}
\newcommand{\events}[1][]{{\mathit{Events}}_{#1}}
\newcommand{\actions}[1][]{{\mathit{Actions}}_{#1}}
\newcommand{\haps}[1][]{{\mathit{Haps}}_{#1}}
\newcommand{\ghaps}[1][]{{\mathit{GHaps}}_{#1}}
\newcommand{\truethings}[1][]{\haps[#1]}
\newcommand{\agents}{\mathcal{A}}
\newcommand{\msgs}[1][]{\Msgs[#1]}
\newcommand{\Msgs}[1][]{\mathit{Msgs}^{#1}}
\newcommand{\glob}[2]{{\mathop{\mathit{global}}\left(\ifstrempty{#1}{\dots}{#1},\ifstrempty{#2}{\dots}{#2}\right)}}
\newcommand{\gsend}[4]{\mathop{\mathit{gsend}}(%
    \ifstrempty{#1}{\dots}{#1},%
    \ifstrempty{#2}{\dots}{#2},%
    \ifstrempty{#3}{\dots}{#3},%
    \ifstrempty{#4}{\dots}{#4}%
)}
\newcommand{\send}[2]{\mathop{\mathit{send}}(%
    \ifstrempty{#1}{\dots}{#1},%
    \ifstrempty{#2}{\dots}{#2}%
)}
\newcommand{\grecv}[4]{\mathop{\mathit{grecv}}(%
    \ifstrempty{#1}{\dots}{#1},%
    \ifstrempty{#2}{\dots}{#2},%
    \ifstrempty{#3}{\dots}{#3},%
    \ifstrempty{#4}{\dots}{#4}%
)}
\newcommand{\recv}[2]{\mathop{\mathit{recv}}(%
    \ifstrempty{#1}{\dots}{#1},%
    \ifstrempty{#2}{\dots}{#2}%
)}
\newcommand{\fakeof}[2]{{\mathop{\mathit{fake}}\left(#1,#2\right)}}
\newcommand{\localstates}[2][]{%
    {\mathscr{L}^{#1}_{#2}}%
}
\newcommand{\globalstates}{\mathscr{G}}
\newcommand{\globalinitialstates}[1][]{\mathscr{G}^{#1}(0)}
\newcommand{\failof}[1]{{\mathop{\mathit{fail}}\left(#1\right)}}
\newcommand{\sleep}[1]{{\mathop{\mathit{sleep}}\left(#1\right)}}
\newcommand{\hib}[1]{{\mathop{\mathit{hibernate}}\left(#1\right)}}
\newcommand{\tick}{\textbf{noop}}
\newcommand{\mistakefor}[2]{#1\mapsto#2}
\newcommand{\correct}[1]{correct_{#1}}
\newcommand{\fake}[2][]{\mathit{fake}_{#1}\left(#2\right)}
\newcommand{\occurred}[2][]{\mathit{occurred}_{#1}(#2)}
\newcommand{\trueoccurred}[2][]{\overline{\mathit{occurred}}_{#1}(#2)}
\newcommand{\multipede}[2]{\mbox{multipede}^{#1}_{#2}}
\newcommand{\envprotocol}[1]{P_{\epsilon}\ifstrempty{#1}{}{\left(#1\right)}}
\newcommand{\agprotocol}[2]{{P_{#1}\ifstrempty{#2}{}{\left(#2\right)}}}
\newcommand{\joinprotocol}[1]{P\ifstrempty{#1}{}{\left(#1\right)}}
\newcommand{\failed}[3][]{{Failed_{#1}\left(#2\ifstrempty{#3}{}{,#3}\right)}}
\newcommand{\adversary}{adversary}
\newcommand{\alphae}[3][]{{%
    \alpha_{\epsilon_{#1}}^{#3}\ifstrempty{#2}{}{\left({#2}\right)}%
}}
\newcommand{\alphaag}[3]{{%
    \alpha_{#1}^{#3}\ifstrempty{#2}{}{\left(#2\right)}%
}}
\newcommand{\betae}[3][]{{%
    \beta_{\epsilon_{#1}}^{#3}\ifstrempty{#2}{}{\left({#2}\right)}%
}}
\newcommand{\betaag}[3]{{%
    \beta_{#1}^{#3}\ifstrempty{#2}{}{\left(#2\right)}%
}}
\newcommand{\filtere}[3][]{filter^{#1}_{\epsilon}\ifstrempty{#2}{}{\left(#2,#3\right)}}
\newcommand{\filterag}[4][]{filter^{#1}_{#2}\ifstrempty{#3}{}{\left(#3,#4\right)}}
\newcommand{\update}[2]{update\ifstrempty{#1}{}{\left(#1,#2\right)}}
\newcommand{\updatee}[2]{update_{\epsilon}\ifstrempty{#1}{}{\left(#1,#2\right)}}
\newcommand{\updateag}[4]{update_{#1}\ifstrempty{#2}{}{\left(#2,#3,#4\right)}}
\newcommand{\sigmaof}[1]{\sigma\ifstrempty{#1}{}{\bigl(#1\bigr)}}
\newcommand{\points}{{\agents\times\bbbt}}
\newcommand{\run}[3][]{r#1_{#2}\left(#3\right)}
\newcommand{\transitionExt}[3][{\envprotocol{},\joinprotocol{}}]{\tau_{\ifstrempty{#2}{}{#2,}#1}\ifstrempty{#3}{}{\left({#3}\right)}}
\newcommand{\system}[1]{{R^{#1}}}
\newcommand{\FS}{\mathit{FS}}
\newcommand{\K}[2]{K_{#1}\ifstrempty{#2}{}{#2}}
\newcommand{\adj}{\mathit{adj}}
\newcommand{\SystemBuildFrom}[3]{R\left({#1},{#2},{#3}\right)}
\newcommand{\newfreezerule}[1]{\mathit{CFreeze}\ifstrempty{#1}{}{\left({#1}\right)}}
\newcommand{\newchatterrule}[4]{{#3}\text{-}\mathit{Focus}\mathstrut_{#1}^{{#2}}\ifstrempty{#4}{}{\left({#4}\right)}}
\newcommand{\newechorule}[3]{\mathit{FakeEcho}_{#1}^{{#2}}\ifstrempty{#3}{}{\left({#3}\right)}}
\renewcommand{\vDash}{\models}
\renewcommand{\nvDash}{\not\models}
\newcommand{\A}{\mathfrak{a}}
\newcommand{\E}{\mathfrak{e}}
\newcommand{\I}{\mathcal{I}}
\newcommand{\prop}{\mathit{Prop}}
\newcommand{\edgLoc}{\rightarrow_{\mathit{l}}}
\newcommand{\edgMsg}[1]{{\rightarrow_{c}^{#1}}}
\newcommand{\edgC}[1]{{\rightarrow^{#1}}}
\newcommand{\pastpath}[4][]{{{#3}\mathrel{\rightsquigarrow^{#2}_{#1}}{#4}}}
\newcommand{\truecausalcone}[2]{\blacktriangleright_{#2}^{#1}}
\newcommand{\faultbuffer}[2]{\mathord{\rangle\!\rangle^{#1}_{#2}}}
\newcommand{\silentmasses}[2]{\mathord{\cdot\!\rangle^{#1}_{#2}}}
\newcommand{\havevoice}[2]{\mathord{\rangle\!\rangle\!\!\!\blacktriangleright\!\!^{#1}_{#2}}}
\title{Causality and Epistemic Reasoning \\in Byzantine Multi-Agent Systems}
\author{Roman Kuznets\thanks{Supported by the Austrian Science Fund (FWF) projects RiSE/SHiNE 
(S11405) and ADynNet (P28182).} 
\institute{Embedded Computing Systems\\ TU Wien\\ Vienna, Austria}
\email{roman@logic.at}
\and
Laurent Prosperi
\institute{ENS Paris-Saclay\\ Cachan, France}
\email{laurent.prosperi@ens-cachan.fr}
\and
Ulrich Schmid
\institute{Embedded Computing Systems\\ TU Wien\\ Vienna, Austria}
\email{s@ecs.tuwien.ac.at}
\and
Krisztina Fruzsa\thanks{PhD student in the FWF doctoral program LogiCS (W1255).}
\institute{Embedded Computing Systems\\ TU Wien\\ Vienna, Austria}
\email{krisztina.fruzsa@tuwien.ac.at}
}
\begin{document}
\maketitle

\begin{abstract}
Causality is an important concept both for proving impossibility results and for synthesizing efficient protocols in distributed computing. For asynchronous agents communicating over unreliable channels, causality is well studied and understood. This understanding, however, relies heavily on the assumption that agents themselves are correct and reliable. We provide the first epistemic analysis of causality in the presence of byzantine agents, i.e., agents that can deviate from their protocol and, thus, cannot be relied upon. Using our new framework for epistemic reasoning in fault-tolerant multi-agent systems, we determine the byzantine analog of the causal cone and describe a communication structure, which we call a multipede, necessary for verifying preconditions for actions in this setting.\looseness=-1
\end{abstract}

\section{Introduction}
Reasoning about knowledge has been a valuable tool for analyzing distributed systems for decades~\cite{bookof4,HM90}, and has provided a number of fundamental insights. As crisply formulated by Moses~\cite{Mos15TARK} in the form of the \emph{Knowledge of Preconditions Principle}, a precondition for action must be known  in order to be actionable. In a distributed environment, where  agents only communicate  by exchanging messages, an agent can only learn about events happening to other agents via messages (or sometimes the lack thereof~\cite{GM18:PODC}). 
\looseness=-1

In \emph{asynchronous systems}, where the absence of communication is indistinguishable from delayed communication,  agents can only rely on messages they receive. Lamport's seminal definition of the \emph{happened-before} relation~\cite{Lam78} establishes the causal structure for asynchronous agents in the agent--time graph describing a run of a system. This structure is often referred to as a \emph{causal cone}, whereby causal links are either time transitions from past to future for one agent or messages from one agent to another. As demonstrated by Chandy and Misra~\cite{ChaMis86DC},  the behavior of an asynchronous agent can only be affected by events from within its causal cone. 

The standard way of showing that an agent does not know  of an event 
is to modify a given run by removing the event in question in such a way that the agent cannot detect the change.  By Hintikka's definition of knowledge~\cite{Hin62}, the agent thinks it possible that  the event has not occurred  and, hence, does not know of  the event to have occurred. Chandy and Misra's result shows that in order for agent~$i$ to learn of an event happening to another agent~$j$, there must exist a chain of successfully delivered messages leading from the moment of agent~$j$ observing the event to some past or present state of agent~$i$. 
This observation remains valid in asynchronous distributed systems where messages could be lost and/or where agents may stop operating (i.e.,~crash)~\cite{dwork1990knowledge,halpern2001characterization,MT88}.
\looseness=-1

In \emph{synchronous systems}, if message delays are upper-bounded, agents can also learn from the absence
of communication (com\-mun\-ic\-ation-by-time). As shown in~\cite{ben2014beyond},  Lamport's happened-before relation must then be augmented by causal links indicating no communication within the message delay upper bound to also capture causality induced via communication-by-time, leading to the so-called \emph{syncausality relation}. Its utility has been demonstrated using the \emph{ordered response problem}, where  agents must perform a sequence of actions in a given order: both the necessary and sufficient knowledge and a necessary and sufficient communication structure (called a \emph{centipede}) have been determined in~\cite{ben2014beyond}. It is important to note, however, that syncausality works only in fault-free distributed systems with reliable 
communication. Although  it has recently been shown in~\cite{GM18:PODC} that \emph{silent choirs} 
are a way to extend it to distributed systems where agents may crash, the idea does not generalize to less benign faults.
\looseness=-1

Unfortunately, all the above ways of capturing causality and the resulting simplicity of determining the causal cone completely break down
if agents may be \emph{byzantine} 
faulty~\cite{lamport1982byzantine}.  Byzantine faulty agents may behave arbitrarily, in particular, need not adhere to their protocol and may, hence, send arbitrary messages.  It is common to limit the maximum number of agents that ever act byzantine in a distributed system by some number~$f$, which is typically much smaller than the total number~$n$ of agents. 
Prompted by the ever growing number of faulty hardware and software having real-world negative, sometimes life-critical, consequences, capturing causality and providing ways for determining the causal cone in byzantine fault-tolerant distributed systems is both an important and scientifically challenging task. To the best of our knowledge, this challenge has not been addressed in the literature before.\footnote{Despite having ``Byzantine'' in the title, \cite{dwork1990knowledge,halpern2001characterization}~only address benign faults (crashes, send/receive omissions of messages).}

In a nutshell, for $f>0$, the problem of capturing causality becomes complicated by the fact that a simple causal chain of messages is no longer sufficient: a single byzantine agent in the chain could manufacture ``evidence'' for anything, both false negatives and false positives. And indeed, obvious generalizations of message chains do not work. For example, it is a folklore result that, in the case of direct communication, at least \mbox{$f+1$}~confirmations are necessary because $f$~of them could be false. When information is transmitted along arbitrary, possibly branching and intersecting chains of messages, the situation is even more complex and defies simplistic direct analysis. In particular, as shown by the counterexample 
in~\cite[Fig.~1]{Mengernomore}, one cannot rely on Menger's Theorem~\cite{diestel2000graphentheory} for separating nodes in the two-dimensional agent--time graph.
\looseness=-1

\textbf{Major contributions:} In this paper, we generalize the causality structure of asynchronous distributed systems described above to multi-agent systems involving byzantine faulty agents. Relying on our novel  byzantine runs-and-systems framework~\cite{FroCoS19} (described in full detail in~\cite{PKS19:TR}), we utilize some generic epistemic analysis results for determining the shape of the byzantine analog of Lamport's causal cone. Since \emph{knowledge} of an event is too strong a precondition in the presence of byzantine agents, it has to be relaxed to something more akin to \emph{belief} relative to correctness~\cite{MosSho93AI}, for which we coined the term \emph{hope}. We show that hope can only be achieved via a causal message chain that passes solely through correct agents (more precisely, through agents still correct while sending the respective messages). 
While the result looks natural enough, its formal proof is quite involved technically and paints an instructive picture of how byzantine agents can affect the information flow. We also establish a necessary condition for  detecting an event, and a corresponding communication structure (called a \emph{multipede}), which is severely complicated by the fact that the reliable causal cones of indistinguishable runs may be different.
\looseness=-1

\textbf{Paper organization:} In Sect.~\ref{sec:basics}, we succinctly introduce the features 
of our byzantine
runs-and-systems framework~\cite{PKS19:TR} and state some generic theorems and lemmas needed for proving the results of the paper. In Sect.~\ref{sect:run_mod},
we describe the  mechanism of run modifications, which are used to remove events an agent should not know about from a run, without the agent noticing. Our characterization of the byzantine causal cone is provided in Sect.~\ref{sect:ByzCausalCone}, the
necessary conditions for establishing hope for an occurrence
of an event and the underlying multipede structure can be found in 
Sect.~\ref{sect:PrecActions}. Some conclusions in Sect.~\ref{sect:conclusions} round-off the paper. 
\looseness=-1

\section{Runs-and-Systems Framework for Byzantine Agents}
\label{sec:basics}
First, we describe the modifications of the runs-and-systems framework~\cite{bookof4} necessary to account for byzantine behavior. 
To prevent wasting space on multiple definition environments, we give the following series of formal definitions as ordinary text marking defined objects by italics;
consult~\cite{PKS19:TR} for the same  definitions in fully spelled-out format.
As a further space-saving measure, instead of repeating every time ``actions and/or events,'' we use \emph{haps}\footnote{Cf. ``Till I know 'tis done, Howe'er my haps, my joys were ne'er begun.'' W.~Shakespeare, \emph{Hamlet}, Act~IV, Scene~3.\looseness=-1} as  a general term referring to either actions or events. \looseness=-1

The goal of all these definitions is to formally describe a system where \emph{asynchronous agents} $1,\dots,n$ perform actions according to their protocols, observe events, and exchange messages within an environment represented as a special agent~$\epsilon$. Unlike the environment, agents only have limited local information, in particular, being asynchronous, do not have access to the global clock. No assumptions apart from liveness are made about the communication. Messages can be lost, arbitrarily delayed, and/or delivered in the wrong order. This part of the system is a fairly standard asynchronous system with unreliable communication. The novelty is that the environment may additionally cause at most $f$~agents to become faulty \emph{in arbitrary ways}. A faulty agent can perform any of its actions irrespective of its protocol and observe events that did not happen, e.g.,~receive unsent or corrupted messages. It can also have false memories about actions it has performed. At the same time, much like the global clock, such malfunctions are not directly visible to an agent, especially when it mistakenly thinks it acted correctly.\looseness=-1

We fix a finite set $\agents=\{1,\dots,n\}$ of agents. Agent $i\in\agents$ can perform \emph{actions} $a \in \actions[i]$, e.g.,~send \emph{messages}, and witness \emph{events} $e\in \events[i]$ such as message delivery. We denote
 $\haps[i] \ce \actions[i]\sqcup\events[i]$.
The action of sending a copy numbered~$k$ of a message  $\mu \in \msgs$ to an agent $j\in\agents$ is denoted $\send{j}{\mu_k}$, whereas a receipt of such a message from  $i\in\agents$ is recorded locally as $\recv{i}{\mu}$.\footnote{Thus, it is possible to send several copies of the same message in the same round. If one or more of such copies are received in the same round,  however, the recipient does not know which copy it has received, nor that there have been multiple copies.\looseness=-1} \looseness=-1

Agent~$i$ records actions from $\actions[i]$ and observes events from $\events[i]$ without dividing them into correct and faulty. The environment~$\epsilon$, on the contrary, always knows if the agent acted correctly or was forced into byzantine behavior. Hence, the syntactic representations of each hap for agents (local view) and for the environment (global view) must differ, with the latter containing more information. In particular, the global view syntactically distinguishes correct haps from their byzantine  counterparts.  
While there is no way for an agent to distinguish a real event from its byzantine duplicate, it can analyze its recorded actions and compare them with its protocol. Sometimes, this information might be sufficient for the agent to detect its own malfunctions.\looseness=-1

All of $\actions \ce \bigcup_{i\in \agents}\actions[i]$, $\events \ce  \bigcup_{i\in \agents}\events[i]$, and  $\haps \ce \actions \sqcup \events$ represent the local view of haps.  
All  haps taking place after a \emph{timestamp}~$t\in\bbbt\ce\bbbn$ and no later than~\mbox{$t+1$} are grouped into a \emph{round} denoted~\mbox{$t+{}$\textonehalf{}} and are treated as  happening simultaneously. 
To model \emph{asynchronous agents}, we exclude these system timestamps  from the local format of $\haps$. 
At the same time, the environment~$\epsilon$ incorporates the current time\-stamp~$t$ into the global format of every correct action $a\in\actions[i]$, as initiated by agent~$i$ in the local format, via a one-to-one function $\glob{i,t}{a}$. Time\-stamps are especially crucial for proper message processing with $
\glob{i,t}{\send{j}{\mu_k}} \ce \gsend{i}{j}{\mu}{id(i,j,\mu,k,t)}
$ 
for some one-to-one function $id \colon \agents \times \agents \times \msgs \times \bbbn \times \bbbt \to \bbbn$ that assigns each sent message a unique \emph{global message identifier}~(GMI). 
We chose not to model agent-to-agent channels explicitly. With all messages effectively sent through one system-wide channel, these~GMIs  are needed to ensure the causality of message delivery, i.e.,~that only sent messages can be delivered correctly. The sets
$\gtrueactions[i] \ce \{\glob{i,t}{a} \mid t\in \bbbt, a \in \actions[i]\}$  of all possible correct actions for each agent in global format are pairwise disjoint due to the injectivity of $\mathit{global}$.  We set $\gtrueactions \ce \bigsqcup_{i\in\agents}\gtrueactions[i]$. \looseness=-1

Unlike correct actions, correct events witnessed by agents  are generated by the environment~$\epsilon$ and, hence, can be assumed to be produced already in the global format $\gtrueevents[i]$. 
We define $\gtrueevents \ce \bigsqcup_{i\in \agents} \gtrueevents[i]$ assuming them to be pairwise disjoint, and $\gtruethings = \gtrueevents \sqcup \gtrueactions$.
We do not consider the possibility of the environment violating its protocol, which is meant to model the fundamental physical laws of the system. 
Thus, all events that can happen are considered correct. 
A byzantine event is, thus, a subjective notion. It is an event that was perceived by an agent despite not taking place. In other words, each correct event $E \in \gtrueevents[i]$ has a faulty counterpart $\fakeof{i}{E}$, and  agent~$i$ cannot distinguish  the two. 
An important type of correct global events of agent~$j$ is  the delivery  $\grecv{j}{i}{\mu}{id}\in \gtrueevents[j]$ of message~$\mu$ with GMI $id \in \bbbn$ sent by agent~$i$. 
Note that the~GMI, which is used in by the global format to ensure causality, must be removed before the delivery is recorded by the agent in the local format because GMIs~contain the time of sending, which should not be accessible to agents.
To strip this information before updating local histories, we employ a function 
 $
 \mathit{local} \colon  \gtruethings \to  \haps
 $
 converting \emph{correct} haps from the global into the local format 
 in such a way that for actions $\mathit{local}$ reverses $\mathit{global}$, i.e.,~$\mathit{local}\bigl(\glob{i,t}{a}\bigr) \ce a$. For message deliveries,   $\mathit{local}{\bigl(\grecv{j}{i}{\mu}{id}\bigr)} \ce \recv{i}{\mu}$, i.e.,~agent~$j$ only knows that it received message~$\mu$ from agent~$i$.
It is, thus, possible for two distinct correct global events, e.g., $\grecv{j}{i}{\mu}{id}$ and $\grecv{j}{i}{\mu}{id'}$,  representing the delivery of different copies of the same message~$\mu$, possibly sent by~$i$ at different times, to be recorded by~$j$ the same way, as $\recv{i}{\mu}$. \looseness=-1

Therefore, correct actions are initiated by agents in the local format and translated into the global format by the environment. Correct and  byzantine  events  are initiated by the environment in the global format and translated into the local format before being recorded by agents.\footnote{This has already been described for correct events. A byzantine event is recorded the same way as its correct counterpart.} We will now turn our attention
to byzantine actions.

While a faulty event is purely an error of perception, actions can be faulty in another way: they can violate the protocol. 
The crucial question is: who should be responsible for such violations? 
With agents' actions governed by their protocols while everything else is up to  the environment, it seems that errors, especially unintended errors, should be the environment's responsibility. 
A malfunctioning agent tries to follow its protocol but fails for reasons outside of its control, i.e., due to environment interference. 
A malicious agent tries to hide its true intentions from other agents by pretending to follow its expected protocol and, thus, can also be modeled via environment interference. 
Thus, we model faulty actions as byzantine events of the form $\fakeof{i}{\mistakefor{A}{A'}}$ where $A, A' \in \gtrueactions[i] \sqcup\{\tick\}$ for a special \emph{non-action} $\tick$ in global format. Here $A$~is the action (or, in case of $\tick$, inaction) performed, while $A'$~represents the action (inaction) perceived instead by the agent. More precisely, the agent  either records $a' = \mathit{local}(A')\in \events[i]$  if $A' \in \gtrueevents[i]$ or has no record of this byzantine action  if $A' = \tick$.
The byzantine inaction $\failof{i} \ce \fakeof{i}{\mistakefor{\tick}{\tick}}$ is used to make agent $i$ faulty without performing any actions and without leaving a record in $i$'s~local history. 
The set of all $i$'s~byzantine events, corresponding to both faulty events and faulty actions, is denoted $\bevents[i]$, with $\bevents \ce \bigsqcup_{i\in\agents} \bevents[i]$. \looseness=-1

To prevent our asynchronous agents from  inferring the global clock by counting rounds, we make waking up for a round contingent on the environment issuing  a special \emph{system event} $go(i)$ for the agent in question. 
Agent~$i$'s local view of the system immediately after round~\mbox{$t+{}$\textonehalf}, referred to as (\emph{process-time} or \emph{agent-time}) \emph{node}~$(i,t+1)$, is recorded in $i$'s~\emph{local state} $r_i(t+1)$, also called $i$'s~\emph{local history}. 
Nodes~$(i,0)$ correspond to \emph{initial local states} $r_i(0)\in\Sigma_i$, with
$\globalinitialstates \ce \prod_{i\in \agents}\Sigma_i$. 
If a round contains neither  $go(i)$ nor any event to be recorded in  $i$'s~local history, then the said history $r_i(t+1)=r_i(t)$ remains unchanged, denying the agent the knowledge of the round just passed. 
Otherwise, $r_i(t+1) = X \colon r_i(t)$, for   $X \subseteq \haps[i]$, the set   of  all 
actions   and events   perceived by~$i$  in round~\mbox{$t+{}$\textonehalf}, where ${}\colon{}$~stands for concatenation.  
The exact definition will be given via the $\updateag{i}{}{}{}$ function, to be described shortly. Thus, 
the local history~$r_i(t)$ is  a list of all haps as perceived by~$i$ in rounds it was \emph{active} in. The set of all local states of~$i$ is~$\localstates{i}$.\looseness=-1

While not necessary for asynchronous agents, for future backwards compatibility, we add more system events for each agent, to serve as faulty counterparts to $go(i)$. 
Commands $\sleep{i}$ and $\hib{i}$ signify a failure to activate the agent's protocol and differ in that the former enforces waking up the agent (and thus recording time) notwithstanding. 
These commands will be used, e.g.,~for synchronous  systems. 
None of the \emph{system events} $\sevents[i] \ce \{go(i), \sleep{i},\hib{i}\}$ is directly detectable by agents.\looseness=-1

To summarize, $\gevents[i] \ce \gtrueevents[i] \sqcup \bevents[i] \sqcup \sevents[i]$ with $\gevents\ce \bigsqcup_{i\in\agents} \gevents[i]$ and $\ghaps \ce \gevents \sqcup \gactions$. Throughout the paper, horizontal bars signify phenomena that are correct. Note that   the absence of this bar means the absence of a claim of correctness. It does not necessarily imply a fault. Later, this would also apply to formulas, e.g.,~$\trueoccurred[i]{e}$ demands a correct occurrence of an event~$e$ whereas  $\occurred[i]{e}$ is satisfied by either correct or faulty occurrence.\looseness=-1

We now turn to the description of runs and protocols for our byzantine-prone asynchronous agents. A \emph{run}~$r$ is a sequence of \emph{global states} $r(t) = (r_{\epsilon}(t),r_1(t),\dots,r_n(t))$ of the whole system consisting of the \emph{state $r_\epsilon(t)$ of the environment} and local states $r_i(t)$ of every agent. We already discussed the composition of local histories. 
Similarly, the \emph{environment's history} $r_\epsilon(t)$ is a list of all haps that happened, this time faithfully recorded in the global format. Accordingly, 
$r_\epsilon(t+1) = X \colon r_\epsilon(t)$ for the set $X \subseteq \ghaps$ of all haps from round~\mbox{$t+{}$\textonehalf{}}. The set of all global states is denoted~$\globalstates$. \looseness=-1

What happens in each round is determined by protocols~$\agprotocol{i}{}$  of agents, protocol~$\envprotocol{}$ of the environment, and chance, the latter implemented as the \emph{adversary} part of the environment.
Agent~$i$'s \emph{protocol} $\agprotocol{i}{} \colon \localstates{i} \to \wp(\wp(\actions[i]))\setminus\{\varnothing\}$ provides a range $\agprotocol{i}{r_i(t)}$ of sets of actions based on $i$'s~current local state $r_i(t)$, with the view of achieving some collective goal.  Recall that the global timestamp~$t$  is \emph{not} part of $r_i(t)$. 
The control of all events---correct, byzantine, and system---lies with the environment~$\epsilon$ via its protocol 
 $\envprotocol{} \colon \bbbt \to \wp(\wp(\gevents))\setminus \{\varnothing\}$, which \emph{can} depend on a timestamp $t\in\bbbt$ but \emph{not} on the current state. 
The environment's protocol is thus kept impartial by denying it an agenda based on the global history so far. Other parts of the environment must, however, have access to the global history, in particular, to ensure causality.
Thus, the environment's protocol provides a range $P_\epsilon(t)$ of sets of events.
Protocols~$\agprotocol{i}{}$~and~$\envprotocol{}$ are non-deterministic and always provide at least one option. The choice among the options (if more than one) is arbitrarily made by the already mentioned \emph{adversary} part of the environment. 
It is also required that all events from~$P_\epsilon(t)$ be mutually compatible at time~$t$. These \emph{$t$-coherency} conditions are: (a)~no more than one system event $go(i)$, $\sleep{i}$, and $\hib{i}$ per agent~$i$ at a time; (b)~a correct event perceived as~$e$ by agent~$i$ is never accompanied by a byzantine event that $i$~would also perceive as~$e$, i.e.,~an agent cannot be mistaken about witnessing  an event that \emph{did} happen; (c)~the~GMI of a byzantine sent message is the same as if a copy of the same message were sent correctly in the same round. Note that the prohibition~(b) does not extend to correct actions.\looseness=-1

\begin{figure*}[t]
    \begin{center}
        \scalebox{0.65}{
\tikzset{nodes0/.style={anchor=west},}
\tikzset{nodes1/.style={text width=1.5cm,anchor=west},}
\tikzset{nodes2a/.style={text width=0.5cm,anchor=west},}
\tikzset{nodes2/.style={text width=0.8cm,anchor=west},}
\tikzset{nodes3/.style={text width=1.2cm,anchor=west},}
\tikzset{nodes4/.style={text width=1.3cm,anchor=west},}

\begin{tikzpicture}
    \node[nodes0] at (0,1) (0) {$\run{}{t}$};

    \node[nodes1] at (2.5,3)  (11) {$\agprotocol{n}{\run{n}{t}}$};
    \node[nodes1] at (2.5,2)  (01) {$\dots$};
    \node[nodes1] at (2.5,1)  (1) {$\agprotocol{1}{\run{1}{t}}$};
    \node[nodes1] at (2.5,-2) (2) {$\envprotocol{t}$};

    \draw[->,fill=white] (0) -- (11) node[midway,fill=white]{$\agprotocol{n}{}$};
    \draw[->] (0) -- (1) node[midway,fill=white]{$\agprotocol{1}{}$};
    \draw[->] (0) -- (2) node[midway,fill=white]{$\envprotocol{}$};

    \node[nodes2a] at (7,3)  (13a) {$X_n$};
    \node[nodes2a] at (7,2)  (03a) {$\dots$};
    \node[nodes2a] at (7,1)  (3a) {$X_1$};
    \node[nodes2a] at (7,-2) (4a) {$X_\epsilon$};
    
    \draw[->] (11) -- (13a) node[midway,fill=white] {$\adversary{}$};
    \draw[->] (1) -- (3a) node[midway,fill=white] {$\adversary{}$};
    \draw[->] (2) -- (4a) node[midway,fill=white] {$\adversary{}$};
    
    \node[nodes2] at (9.5,3)  (13) {$\alphaag{n}{r}{t}$};
    \node[nodes2] at (9.5,2)  (03) {$\dots$};
    \node[nodes2] at (9.5,1)  (3) {$\alphaag{1}{r}{t}$};
    \node[nodes2] at (12.5,3)  (13m) {$\alphaag{n}{r}{t}$};
    \node[nodes2] at (12.5,2)  (03m) {$\dots$};
    \node[nodes2] at (12.5,1)  (3m) {$\alphaag{1}{r}{t}$};

    \node[nodes2] at (9.5,-2) (4) {$X_\epsilon= \alphae{r}{t}$};
    
    \draw[->] (13a) -- (13) node[midway,fill=white] {$\mathit{global}$};
    \draw[->] (3a) -- (3) node[midway,fill=white] {$\mathit{global}$};
    \draw[double] (4a) -- (4) node[midway] {};
    \draw[double] (4) -- (4) node[midway] {};
    \node[nodes3] at (16,3) (15) {$\betaag{n}{r}{t}$};
    \node[nodes3] at (16,2) (05) {$\dots$};
    \node[nodes3] at (16,1) (5) {$\betaag{1}{r}{t}$};
    \node[nodes3] at (12.5,-2) (6m) {$\betae{r}{t}$};
    \node[nodes3] at (16,-2) (6) {$\betae{r}{t}$};
    
    \draw[double] (13) -- (13m);
    \draw[double] (3) -- (3m);
    \draw[double] (6) -- (6m);

    \draw[->] (13m) -- (15) node[midway,fill=white] (f3)  {$\filterag{n}{}{}$};
    \draw[->] (3m) -- (5) node[midway,fill=white] (f1) {$\filterag{1}{}{}$};
    \draw[->] (4) -- (6m) node[midway,fill=white] (fe) {$\filtere{}{}$};

    \draw[->] (13) edge (fe) (3) edge (fe);
    \draw[->,dotted] (13) edge (f1) (3) edge (f3);
    \draw[->,dashed] (6m) edge (f1) (6m) edge (f3);    
    
    \node[nodes4] at (19,3) (17) {$\run{n}{t+1}$};
    \node[nodes4] at (19,2) (07) {$\dots$};
    \node[nodes4] at (19,1) (7) {$\run{1}{t+1}$};
    \node[nodes4] at (19,-2) (8) {$\run{\epsilon}{t+1}$};

    \draw[->] (15) -- (17) node[midway,fill=white](19) {$\updateag{n}{}{}{}$};
    \draw[->,fill=white] (5) -- (7) node[midway,fill=white] (9){$\updateag{1}{}{}{}$};
    \draw[->] (6) -- (8) node[midway,fill=white] (a){$\updatee{}{}$};
    
    \draw[->] (5) -- (a);
    \draw[->] (15) -- (a);
    
    \draw[->,dashed] (6) -- (19) node[near end,fill=white] {$\betae[n]{r}{t}$};    
    \draw[->,dashed] (6) -- (9) node[near end,right,fill=white] {$\betae[1]{r}{t}$};

    \node[nodes0] at (21.5,1) (10){$\run{}{t+1}$};
    \draw[->] (7) edge (10) (17) edge (10) (8) edge (10);
    
    \draw[->] (-0.25,-3) -- (22.5,-3);
    \node at (0,-3)    (t0) {$|$};
    \node at (0,-3.5) {$t$};
    \node at (3.1,-3)    (t1) {$|$};
    \node at (7.3,-3)  (t2a) {$|$};
    \node at (9.8,-3)  (t2) {$|$};
    \node at (16.6,-3) (t3) {$|$};
    \node at (22,-3)   (t5) {$|$};
    \node at (22,-3.5) {$t+1$};

    \draw[-] (t0) -- (t1) node[midway,above] {Protocol phase};
    \draw[-] (t1) -- (t2a) node[midway,above] {Adversary phase};
    \draw[-] (t2a) -- (t2) node[midway,above] {Labeling phase};
    \draw[-] (t2) -- (t3) node[midway,above] {Filtering phase};
    \draw[-] (t3) -- (t5) node[midway,above] {Updating phase};
\end{tikzpicture}
        }
        \caption{Details of round~$t+{}$\textonehalf{} of a $\transitionExt{f}{}$-transitional run~$r$.}
        \label{fig:trans_rel}
    \end{center}
\end{figure*}

Both the global run $r \colon \bbbt \to \globalstates$ and its local parts $r_i \colon \bbbt \to \localstates{i}$ provide a sequence of snapshots of the system and local states respectively. Given the \emph{joint protocol} $P\ce(P_1,\dots,P_n)$ and the environment's protocol~$P_\epsilon$, we focus on  \emph{$\transitionExt{f}{}$-transitional runs}~$r$ that result from following these protocols and are built according to a \emph{transition relation} $\transitionExt{f}{}\subseteq \globalstates \times \globalstates$ for asynchronous agents at most  $f\geq0$ of which may become faulty in a given run. In this paper, we only deal with generic~$f$, $P_\epsilon$,~and~$P$. Hence, whenever safe, we write~$\tau$ in place of~$\transitionExt{f}{}$. Each  transitional run begins in some initial global state $r(0)\in\globalinitialstates$ and progresses by ensuring that  $\run{}{t}\,\tau\,\,\run{}{t+1} $, i.e.,~$(\run{}{t},\run{}{t+1}) \in \tau$, for each timestamp  $t\in\bbbt$. 
Given~$f$, $P_\epsilon$,~and~$P$, the transition relation~$\transitionExt{f}{}$ consisting of  five consecutive  phases is graphically represented in Figure~\ref{fig:trans_rel} and described in detail below: \looseness=-1
\begin{description}
\item[\textbf{1. Protocol phase.}]
 A  range $\envprotocol{t}\subseteq\wp(\gevents)$ of  $t$-coherent sets of events is determined by the environment's protocol~$P_\epsilon$;
for each $i\in\agents$, a  range~$\agprotocol{i}{\run{i}{t}}\subseteq\wp(\actions[i])$ of sets of $i$'s~actions is determined by the agents' joint protocol~$P$.\looseness=-1
\item[\textbf{2. Adversary phase.}]       
 The adversary non-deterministically picks  a $t$-coherent set 
        $X_\epsilon \in\envprotocol{t}$ and 
        a set 
        $X_i\in\agprotocol{i}{\run{i}{t}}$ for each $i\in\agents$. \looseness=-1
\item[\textbf{3. Labeling phase.}]
Locally represented actions in~$X_i$'s are translated into the global format: $\alphaag{i}{r}{t}\ce \{\glob{i,t}{a} \mid a \in X_i\}\subseteq \gtrueactions[i]$. In particular, correct sends are supplied with~GMIs.\looseness=-1
\item[\textbf{4. Filtering phase.}]
        Functions $\filtere{}{}{}$ and  $\filterag{i}{}{}$  for each $i \in \agents$ remove all  causally impossible  attempted events from $\alphae{r}{t}\ce X_\epsilon$ and actions from $\alphaag{i}{r}{t}$. \looseness=-1\\
4.1. First, $\filtere{}{}$ filters out causally impossible  events based (a)~on the current global state $r(t)$, which could not have been accounted for by the protocol~$P_\epsilon$, (b)~on $\alphae{r}{t}$, and (c)~on all  $\alphaag{i}{r}{t}$, not accessible for~$P_\epsilon$ either. Specifically, two  kinds of events are causally impossible for asynchronous agents with at most $f$~byzantine failures and are  removed by  $\filtere{}{}$ in two stages as follows (formal definitions can be found in the appendix, Definitions~\ref{def:filter_env}--\ref{def:filter_ag}; cf.~also~\cite{PKS19:TR} for details): \looseness=-1
\begin{compactenum}[(1)]
\item in the 1st~stage, all byzantine events are removed by $\filtere[\leq f]{}{}$ if they would have resulted in more than $f$~faulty agents in total;\looseness=-1
\item in the 2nd~stage, correct receives  without matching sends (either in the history $r(t)$ or in the current round) are removed by $\filtere[B]{}{}$.\looseness=-1
\end{compactenum}    
The  resulting set of events to actually occur in round~\mbox{$t+{}$\textonehalf{}} is denoted \looseness=-1
        \[
        \label{eq:beta_env_run}
        \betae{r}{t} \ce \filtere{\run{}{t}}{\quad\alphae{r}{t},\quad \alphaag{1}{r}{t}, \quad\dots ,\quad \alphaag{n}{r}{t}}.
        \]
4.2. After events are filtered, $\filterag{i}{}{}$ for each agent $i$  removes all $i$'s actions  if{f} $go(i) \notin \betae{r}{t}$.  The resulting sets of actions to be actually performed by agents in round $t+{}$\textonehalf{} are\looseness=-1 
        \[
        \label{eq:beta_agent_run}
        \betaag{i}{r}{t} \ce \filterag{i}{\alphaag{1}{r}{t}, \quad\dots ,\quad \alphaag{n}{r}{t} }{\quad\betae{r}{t}}.
        \]
                We have $\betaag{i}{r}{t} \subseteq \alphaag{i}{r}{t} \subseteq \gtrueactions[i]$ and $\betae{r}{t} \subseteq \alphae{r}{t} \subseteq \gevents$.\looseness=-1
\item[\textbf{5. Updating phase.}]
         The resulting mutually causally consistent sets of events  $\betae{r}{t}$ and of actions~$\betaag{i}{r}{t}$ are appended to the global history $r(t)$; for each $i \in \agents$, all non-system events from 
         \[
         \betae[i]{r}{t} \ce \betae{r}{t}\cap\gevents[i]
         \]
          as \emph{perceived} by the agent and all correct actions $\betaag{i}{r}{t}$ are appended \emph{in the local form} to the local history $r_i(t)$, which may remain unchanged if no action or event triggers an update or be appended with the empty set if an update is triggered only by a system event $go(i)$ or $\sleep{i}$:\looseness=-1
            \begin{align}
            \label{eq:run_trans_env}
        \run{\epsilon}{t+1} \ce& \updatee{\run{\epsilon}{t}}{\quad\betae{r}{t},\quad \betaag{1}{r}{t},\quad \dots,\quad \betaag{n}{r}{t}};
\\
	\label{eq:run_trans_ag}
        \run{i}{t+1} \ce& \updateag{i}{\run{i}{t}}{\quad\betaag{i}{r}{t}}{\quad\betae{r}{t}}.  
         \end{align}
         Formal definitions of $\updatee{}{}{}$ and $\updateag{i}{}{}{}$ are given  in Def.~\ref{def:state-update} in the appendix.
\end{description}

The  protocols~$P$~and~$P_\epsilon$ only affect phase~1, so we group the operations in the remaining phases~2--5 into a \emph{transition template}~$\tau_f$ that computes a transition relation~$ \transitionExt{f}{}$ for any given~$P$~and~$P_\epsilon$. This transition template, primarily via the filtering functions, represents asynchronous agents with at most $f$~faults. The template can be modified independently from the protocols to capture other distributed scenarios.\looseness=-1

 \emph{Liveness} and similar  properties that cannot be ensured on a round-by-round basis are enforced by restricting the allowable runs by \emph{admissibility conditions}~$\Psi$, which formally are  subsets of the set~$R$ of all transitional runs. 
For example,
since no goal can be achieved without allowing agents to act from time to time, it is standard to impose the
\emph{Fair Schedule}~($\FS$) admissibility condition, which for byzantine agents  states that an agent can only be  delayed indefinitely through persistent faults:
\[\FS \ce \{ r \in R \mid (\forall i \in \agents)\, (\forall t\in\bbbt)\,(\exists t'\ge t)\, 
             {\betae{r}{t'} \cap \sevents[i]}\ne \varnothing
        \}.
        \]
In scheduling terms,  $\FS$~ensures that each agent be considered for using  CPU~time infinitely often. Denying any of these requests constitutes a failure, represented by a $\sleep{i}$ or $\hib{i}$ system event. \looseness=-1

We now combine all these parts in the notions of context and agent-context:

\begin{definition}
    \label{def:consistency}  \label{def:system}  
A \emph{context}  $\gamma=(\envprotocol{},\globalinitialstates,\tau_f,\Psi)$ consists of   an environment's protocol~$\envprotocol{}$, a set of global initial states $\globalinitialstates$,  a transition template~$\tau_f$ for $f\geq 0$, and an admissibility condition~$\Psi$. For a joint protocol~$\joinprotocol{}$, we call $\chi= (\gamma,\joinprotocol{})$  an \emph{agent-context}. A run $r \colon \bbbt \to \globalstates$ is called \emph{weakly $\chi$-consistent} if $r(0) \in \globalinitialstates$ and the run is $ \transitionExt{f}{}$-transitional. A weakly $\chi$-consistent run~$r$ is called \emph{(strongly)} $\chi$-\emph{consistent} if  $r \in \Psi$. The set of all $\chi$-consistent runs  is denoted~$\system{\chi}\subseteq R$. Agent-context~$\chi$ is called \emph{non-excluding} if any finite prefix of a  weakly $\chi$-consistent run  can be extended to a  $\chi$-consistent run.\looseness=-1
   \end{definition}
We  are also interested in  narrower types of faults. Let $\fevents[i]\ce \bevents[i]\sqcup\{\sleep{i},\hib{i}\}$. 
 \looseness=-1
\begin{definition} 
\label{def:agent_types}Environment's protocol~$\envprotocol{}$ makes an agent $i \in \agents$:
\begin{compactenum}
\item 
\emph{correctable} if  $X \in \envprotocol{t}$ implies that  
$
   X \setminus \fevents[i]    \in \envprotocol{t}$;\looseness=-1

\item \emph{delayable} if     $X \in \envprotocol{t}$ implies
   $X \setminus \gevents[i]  \in \envprotocol{t}$;\looseness=-1

\item \emph{error-prone} if
$X \in \envprotocol{t}$ implies that, for any $Y \subseteq  \fevents[i]$, the set
   $Y \sqcup (X \setminus  \fevents[i])  \in \envprotocol{t}$ whenever it is $t$-coherent;\looseness=-1
   
   \item\emph{gullible} 
 if
    $X \in \envprotocol{t} $
    implies that, for any $Y \subseteq  \fevents[i]$,  the set
   $Y \sqcup (X \setminus \gevents[i])  \in \envprotocol{t}$ whenever it is $t$-coherent;\looseness=-1
\item \emph{fully byzantine} if agent $i$ is both error-prone and gullible.\looseness=-1
\end{compactenum}
\end{definition}
In other words, correctable agents can always be made correct for the round by removing all their byzantine events; delayable agents can always be forced to skip a round completely (which does not make them faulty); error-prone (gullible) agents can exhibit any faults in addition to (in place of) correct events, thus, implying correctability (delayability);  fully byzantine agents' faults are unrestricted. Common types of faults, e.g.,~crash or omission failures, can be obtained by restricting allowable sets~$Y$ in the definition of gullible agents.  \looseness=-1

Now that our byzantine version of the runs-and-systems framework is laid out, we define interpreted systems in this framework in the usual way, i.e., as  special kinds of Kripke models for multi-agent distributed environments~\cite{bookof4}. 
For an agent-context~$\chi$, we consider pairs $(r,t')\in R^\chi\times \bbbt$ of a $\chi$-consistent run~$r$  and timestamp~$t'$. A \emph{valuation function} $\pi \colon \prop \to \wp(R^\chi \times \bbbt)$ determines  whether an atomic proposition from $\prop$ is true in run~$r$ at time~$t'$. The determination is arbitrary except for a small set of \emph{designated atomic propositions} 
whose truth value at~$(r,t')$ is fully determined. More specifically, for $i\in\agents$, $o \in \haps[i]$, and $t\in \bbbt$ such that $t \leq t'$, \looseness=-1

$\correct{(i,t)}$ is true at~$(r,t')$, or \emph{node~$(i,t)$ is correct in  run~$r$}, if{f} no faulty event happened to~$i$  by timestamp~$t$, i.e.,~no event from $\fevents[i]$ appears in the $r_\epsilon(t)$ prefix of the $r_\epsilon(t')$ part of $r(t')$; \looseness=-1

$\correct{i}$ is true at~$(r,t')$ if{f} $\correct{(i,t')}$ is;\looseness=-1

  $\fake[(i,t)]{o}$ is true at~$(r,t')$ if{f} $i$~has a \emph{faulty} reason to believe  that  $o\in\haps[i]$ occurred in   round~\mbox{$t-{}$\textonehalf}, i.e.,~$o \in r_i(t)$ because (at least in part) of some $O \in \bevents[i] \cap \betae{r}{t-1}$;\looseness=-1
  
$\trueoccurred[(i,t)]{o}$ is true at~$(r,t')$ if{f} $i$ has a \emph{correct} reason to believe    $o\in\haps[i]$ occurred in   round~\mbox{$t-{}$\textonehalf}, i.e.,~$o \in r_i(t)$ because (at least in part) of $O \in (\gtrueevents[i] \cap \betae{r}{t-1}) \sqcup  \betaag{i}{r}{t-1}$;\looseness=-1

   $\trueoccurred[i]{o}$ is true at~$(r,t')$ if{f} at least one of $\trueoccurred[(i,m)]{o}$ for $1 \leq m \leq t'$ is; 
   
   $\trueoccurred{o}$  is true at~$(r,t')$ if{f} at least one of $\trueoccurred[i]{o}$ for $i \in \agents$ is;\looseness=-1
   
   $\occurred[i]{o}$  is true at~$(r,t')$ if{f} either $\trueoccurred[i]{o}$ is or at least one of $\fake[(i,m)]{o}$  for $1 \leq m \leq t'$ is. \looseness=-1 

An \emph{interpreted system} is a pair $\I = (R^\chi, \pi)$. The epistemic language $\varphi \cce p \mid \lnot \varphi \mid (\varphi \land \varphi) \mid K_i \varphi$ where $p \in \prop$ and $i\in\agents$ and derived Boolean connectives are defined in the usual way.  Truth  for these (\emph{epistemic}) \emph{formulas} is defined in the standard way, in particular, for a run $r \in R^\chi$, timestamp $t \in \bbbt$, atomic proposition $p \in \prop$, agent $i \in \agents$, and formula $\varphi$ we have 
$(\I,r,t) \models p$ 
if{f} 
$(r,t) \in \pi(p)$ and 
$(\I,r,t) \models K_i \varphi$ if{f} $(\I,r',t') \models \varphi$ 
for any $r'\in R^\chi$ and $t' \in \bbbt$ such that 
$r_i(t) = r'_i(t')$. A formula~$\varphi$ is valid in~$\I$, written $\I \models \varphi$, if{f} 
$(\I,r,t) \models \varphi$ for all $r \in R^\chi$ and $t \in \bbbt$.\looseness=-1

Due to the $t$-coherency of all allowed protocols~$P_\epsilon$, an agent cannot be both right and wrong about any local  event $e \in \events[i] $, i.e.,~$\I \models 
\lnot(\trueoccurred[(i,t)]{e} \land \fake[(i,t)]{e})$. Note that for actions this \emph{can} happen.\looseness=-1

Following the concept   from~\cite{CKRev} of  global events that are local for an agent, we define conditions under which formulas can be treated as such local events. A formula~$\varphi$ is called  \emph{localized for~$i$ within an agent-context~$\chi$} if{f} 
   $r_i(t) = r'_i(t')$ implies
   $(\I,r,t) \models \varphi \Longleftrightarrow (\I,r',t') \models \varphi$
   for any $
  \I=(R^\chi,\pi)$, runs $r, r' \in R^\chi$, and timestamps $t, t' \in \bbbt$.
By these definitions, we immediately obtain:\looseness=-1
\begin{lemma}
\label{lem:localized}
The following statements are valid for any formula~$\varphi$ localized for an agent $i\in\agents$ within an agent-context~$\chi$ and  any interpreted system $\I=(R^\chi,\pi)$: $
 \I \models \varphi \leftrightarrow K_i\varphi$ and
$ \I \models \lnot\varphi \leftrightarrow K_i\lnot\varphi$.
\end{lemma}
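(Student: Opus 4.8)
The plan is to derive both biconditionals directly from the definition of localization together with the standard semantics of $K_i$, using nothing more than the observation that the truth of $\varphi$ at a point depends only on the agent's local state. First I would unfold the definition: $\varphi$ being localized for~$i$ within~$\chi$ means that whenever $r_i(t)=r'_i(t')$ for $r,r'\in R^\chi$, we have $(\I,r,t)\models\varphi$ if and only if $(\I,r',t')\models\varphi$, for every interpreted system $\I=(R^\chi,\pi)$. Fix such an $\I$ and an arbitrary point $(r,t)$ with $r\in R^\chi$ and $t\in\bbbt$; it suffices to show $(\I,r,t)\models\varphi\leftrightarrow K_i\varphi$, since validity in~$\I$ is truth at all points.

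For the left-to-right direction of $\I\models\varphi\rightarrow K_i\varphi$: assume $(\I,r,t)\models\varphi$. By the semantics of knowledge, $(\I,r,t)\models K_i\varphi$ iff $(\I,r',t')\models\varphi$ for all $r'\in R^\chi$ and $t'\in\bbbt$ with $r_i(t)=r'_i(t')$. Take any such $r',t'$. Since $r_i(t)=r'_i(t')$ and $\varphi$ is localized for~$i$, we get $(\I,r',t')\models\varphi$ from $(\I,r,t)\models\varphi$. As $r',t'$ were arbitrary among those indistinguishable from $(r,t)$ for~$i$, this yields $(\I,r,t)\models K_i\varphi$. For the right-to-left direction: assume $(\I,r,t)\models K_i\varphi$. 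Since $r_i(t)=r_i(t)$ trivially, the point $(r,t)$ is itself one of the points quantified over in the semantics of $K_i$, so $(\I,r,t)\models\varphi$. This establishes $\I\models\varphi\leftrightarrow K_i\varphi$.

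For the second biconditional, the cleanest route is to note that if $\varphi$ is localized for~$i$ then so is $\lnot\varphi$: indeed, $r_i(t)=r'_i(t')$ gives $(\I,r,t)\models\varphi\Leftrightarrow(\I,r',t')\models\varphi$, hence by contraposition $(\I,r,t)\models\lnot\varphi\Leftrightarrow(\I,r',t')\models\lnot\varphi$. Applying the already-proved first biconditional to the localized formula $\lnot\varphi$ in place of~$\varphi$ gives $\I\models\lnot\varphi\leftrightarrow K_i\lnot\varphi$, as required.

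There is essentially no obstacle here: the statement is an immediate consequence of the definition of localization and the $S5$-style semantics of $K_i$ on the equivalence relation $r_i(t)=r'_i(t')$. The only point requiring a moment's care is the quantification in the semantics of $K_i$ — one must remember that it ranges over all runs $r'\in R^\chi$ and all timestamps $t'$, not merely over the fixed run — but since the localization hypothesis is stated with exactly the same generality, the two match up without friction. The closure of localization under negation, used for the second half, is the only auxiliary observation, and it is routine.
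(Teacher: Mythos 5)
Your proof is correct and is exactly the argument the paper has in mind: the paper offers no explicit proof, stating only that the lemma follows ``immediately'' from the definitions of localization and the semantics of $K_i$, which is precisely the unfolding you carry out (introspection from localization, factivity from $r_i(t)=r_i(t)$, and closure of localization under negation for the second biconditional). No gaps.
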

The knowledge of preconditions principle~\cite{Mos15TARK} postulates that in order to  act on a  precondition an agent must be able to infer it from its local state. Thus, Lemma~\ref{lem:localized}
 shows that formulas localized for~$i$ can \emph{always} be used as preconditions.
Our first observation is that the agent's \emph{perceptions} of a run are one example of such epistemically acceptable  (though not necessarily reliable) preconditions:\looseness=-1
\begin{lemma}
    \label{lem:occurred-persistence_paper}
    For any agent-context~$\chi$, agent $i \in \agents$, and local hap $o \in \haps[i]$, the formula $\occurred[i]{o}$ is localized for~$i$ within~$\chi$.\looseness=-1
\end{lemma}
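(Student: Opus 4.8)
The plan is to unfold the definition of $\occurred[i]{o}$ and show that each of its disjuncts is already determined by $i$'s local state, so the whole formula is. Recall that $(\I,r,t)\models\occurred[i]{o}$ holds iff either $\trueoccurred[i]{o}$ holds at $(r,t)$ or $\fake[(i,m)]{o}$ holds at $(r,t)$ for some $1\le m\le t$; and both of these, by the definitions of the designated atomic propositions, are phrased as ``$o\in r_i(m)$ because (at least in part) of some $O$ in a certain set of round-$(m{-}1)$ happenings.'' So the key observation I would isolate first is: whether $o\in\haps[i]$ appears in the local state $r_i(m)$, and \emph{why} it was appended there (i.e.\ whether the triggering happening $O$ was a correct event/action or a byzantine event), is information that the $\updateag{i}{}{}{}$ function writes into the local history itself — the local form of a correct event, a byzantine event perceived as $o$, and a correct action all sit in $r_i(m)$, and the distinction between ``correct reason'' and ``faulty reason'' for $o$ is, by the way perceptions are recorded, recoverable from $r_i(t)$. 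Thus I want a sub-claim of the form: there is a function of the local state $r_i(t)$ that decides, for each $m\le t$, whether $\trueoccurred[(i,m)]{o}$ holds and whether $\fake[(i,m)]{o}$ holds.

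Granting that sub-claim, the proof is short. Suppose $r_i(t)=r'_i(t')$ for interpreted systems $\I=(R^\chi,\pi)$, runs $r,r'\in R^\chi$, timestamps $t,t'$. Since $\occurred[i]{o}$ at $(r,t)$ is a finite disjunction, over $m$ with $1\le m\le t$, of ``$\trueoccurred[(i,m)]{o}$ or $\fake[(i,m)]{o}$'', and since $r_i(t)=r'_i(t')$ forces $t$ and $t'$ to index the same local history of the same length (as each non-trivial entry corresponds to an active round, and trivial rounds leave $r_i$ unchanged — so the relevant $m$'s correspond to \emph{positions in the list}, not to absolute timestamps), the sub-claim gives that $\trueoccurred[(i,m)]{o}$ (resp.\ $\fake[(i,m)]{o}$) holds at $(r,t)$ iff the corresponding position-$m$ version holds at $(r',t')$. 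Hence $(\I,r,t)\models\occurred[i]{o}$ iff $(\I,r',t')\models\occurred[i]{o}$, which is exactly localization for $i$ within $\chi$. Note the valuation $\pi$ never enters because $\occurred[i]{o}$ is built only from designated atomic propositions whose truth is fixed by $(r,t)$.

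The main obstacle, and the part I would spend the most care on, is the sub-claim that the local history encodes \emph{why} $o$ was appended. The subtlety is that $\trueoccurred[(i,m)]{o}$ and $\fake[(i,m)]{o}$ are defined in terms of the global round-content $\betae{r}{m-1}$ and $\betaag{i}{r}{m-1}$, which are \emph{not} part of $r_i$; one must argue that the \emph{local trace} of these — i.e.\ what $\updateag{i}{}{}{}$ actually writes — suffices to reconstruct the needed correct/faulty classification. Here I would lean on two facts already in the excerpt: (i) for events, $t$-coherency condition~(b) guarantees $\I\models\lnot(\trueoccurred[(i,m)]{e}\land\fake[(i,m)]{e})$, so for $o\in\events[i]$ there is no ambiguity, and the presence of (the local form of) $o$ in the $m$-th entry of $r_i$ together with the impossibility of both reasons pins down which reason it was — except this still needs care, since an event $o$ could in principle appear at position $m$ for \emph{no} recorded reason if... no: by construction $o\in r_i(m)\setminus r_i(m-1)$ only if some $O\in\betae{r}{m-1}$ perceived as $o$ fired, and $O$ is either in $\gtrueevents[i]$ or in $\bevents[i]$, giving exactly one of the two reasons; (ii) for actions $o\in\actions[i]$, both reasons \emph{can} co-occur (as the excerpt flags), but then $\trueoccurred[(i,m)]{o}$ holds, so $\occurred[i]{o}$ holds, and that is again decided by the mere appearance of $o$ in the $m$-th entry, since a correct action $o$ is appended exactly when $o\in\betaag{i}{r}{m-1}$. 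So in every case the disjunction $\trueoccurred[(i,m)]{o}\lor\fake[(i,m)]{o}$ reduces to ``$o\in r_i(m)\setminus r_i(m-1)$,'' a property of $r_i(t)$ alone — and that is the clean statement I would prove and then feed into the disjunction argument above.
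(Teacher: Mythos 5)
Your proof is correct and matches the argument the paper intends (the lemma is stated without proof as an immediate consequence of the definitions): $\occurred[i]{o}$ holds at $(r,t)$ iff $o$ appears in some appended entry of the list $r_i(t)$, because each such entry is exactly $\sigmaof{X_{\epsilon_i}\sqcup X_i}$ for that round's $\beta$-sets, and because the relevant timestamps $m$ correspond to list positions, this is a property of the local state alone. One aside should be dropped, though: your claim that $t$-coherency~(b) lets the local state ``pin down which reason it was'' for an event is false --- the agent cannot distinguish a correct event from its byzantine duplicate, and the paper notes right after the lemma that correctness of perceptions is \emph{not} localized --- but this aside is never used, since your final reduction needs only the disjunction $\trueoccurred[(i,m)]{o}\lor\fake[(i,m)]{o}$.
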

It can be shown that correctness of these perceptions is not localized for $i$ and, hence, cannot be the basis for actions. In fact, Theorem~\ref{lem:no-k-occurred} will reveal that no agent can 
establish its own correctness.

\section{Run modifications}
\label{sect:run_mod}

We will now introduce the pivotal technique of \emph{run modifications}, which are used to show an agent does not know~$\varphi$ by creating an indistinguishable run where $\varphi$~is false.
\looseness=-1

\begin{definition}
A function $\rho \colon  R^\chi \rightarrow \wp(\gtrueactions[i]) \times \wp(\gevents[i])$ is called an \emph{$i$-intervention for an agent-context~$\chi$ and agent $i\in\agents$}. 
A \emph{joint intervention} $B= (\rho_1,\dots,\rho_n)$ consists of $i$-interventions~$\rho_i$ for each agent $i \in \agents$. 
An \emph{adjustment} $[B_t;\dots;B_0]$ is a sequence of joint interventions
$B_0\,\dots,B_t$ 
to be performed at rounds from~\textonehalf{} to~\mbox{$t+{}$\textonehalf{}}. \looseness=-1
\end{definition}
We consider an  $i$-intervention $\rho(r)=(X, X_{\epsilon})$ applied to a round~\mbox{$t+{}$\textonehalf{}} of a given run~$r$ to be  a meta-action by the system designer, intended to modify the results of this round for~$i$ in such a way that  
$\betaag{i}{r'}{t}=X$ 
and 
$\betae[i]{r'}{t}=\betae{r'}{t} \cap \gevents[i]=X_{\epsilon}$ 
in the artificially constructed new run~$r'$. For $\rho(r)=(X, X_{\epsilon})$,
we denote $\A\rho(r) \ce  X$ and $\E\rho(r) \ce X_{\epsilon}$. Accordingly, a joint intervention $(\rho_1,\dots,\rho_n)$ prescribes actions $\betaag{i}{r'}{t}=\A\rho_i(r)$  for each agent~$i$ and events $\betae{r'}{t}=\bigsqcup_{i\in\agents} \E\rho_i(r)$ for the round in question. Thus, an adjustment $[B_t;\dots;B_0]$ fully determines actions and events in the initial \mbox{$t+1$}~rounds of run~$r'$:\looseness=-1
\begin{definition}
    \label{def:build_from}
    Let
    $\label{eq:generic_adj}
    \adj = \left[
    B_t;\dots;B_0
    \right]$  be an adjustment  where
    $
    B_{m} = (\rho_1^{m}, \dots, \rho_n^{m})$
    for each $0 \leq m \leq t$ 
    and each~$\rho_i^{m}$ is an $i$-intervention for an agent-context $\chi=\left((\envprotocol{},\globalinitialstates,\tau_f,\Psi),P\right)$.
A run~$r'$ is \emph{obtained from}  $r \in R^\chi$ \emph{by adjustment} $\adj$ if{f} for all $t'\leq t$, all $T' > t$, and all $i \in \agents$, \looseness=-1
\begin{enumerate}[(a)]
\item           
	$\run[']{}{0}\ce\run{}{0}$,  
\item\label{clause:run_adj_ag}               
	$
            \run[']{i}{t'+1} \ce \mathit{update}_i\left(\run[']{i}{t'},\,\,\A\rho_i^{t'}(r),\,\,\bigsqcup_{i \in\agents}\E\rho_i^{t'}(r)\right)
        $,
\item                
	$
            \run[']{\epsilon}{t'+1} \ce \mathit{update}_\epsilon\left(\run[']{\epsilon}{t'},\,\,\bigsqcup_{i \in\agents}\E\rho_i^{t'}(r),\,\, \A\rho_1^{t'}(r),\,\, \dots,\,\,\A\rho_n^{t'}(r)\right)$,
\item                     
	$
            r'(T')\,\,\transitionExt{f}{}\,\,r'(T'+1)$.
                        \end{enumerate}
$\SystemBuildFrom{\transitionExt{f}{}}{r}{\adj}$ is the set of all runs obtained from~$r$ by $\adj$.\looseness=-1
\end{definition}

Note that adjusted  runs need not be a priori transitional, i.e.,~obey~(d), for $t' \leq t$. Of course, we intend to use adjustments in such a way that $r'$~\emph{is} a transitional run. But it requires a separate proof.  In order to improve the readability of these proofs, we allow ourselves (and already used) a small abuse of notation. The $\beta$-sets $\betae{r'}{t}$  and $\betaag{i}{r'}{t}$ were initially defined only for transitional runs as the result of filtering. But they also represent the sets of events
and $i$'s~actions  respectively happening in round~\mbox{$t+{}$\textonehalf}. This alternative definition is equivalent for transitional runs and, in addition, can be used for adjusted runs~$r'$. This is what we mean whenever we write $\beta$-sets for runs obtained by adjustments.

In order to minimize an agent's knowledge in accordance with the structure of its (soon to be defined) reliable (or byzantine) causal cone, we will use several types of $i$-interventions that copy round~\mbox{$t+{}$\textonehalf{}} of the original run to various degrees: (a)~$\newfreezerule{}$  denies~$i$ all actions and events, (b)~$\newechorule{i}{t}{}{}$ reproduces  all messages sent by~$i$ but  in byzantine form, (c)~$\newchatterrule{i}{t}{X}{}$ (for an appropriately chosen set $X \subseteq \agents \times \bbbt$) faithfully reproduces  all actions and as many events  as causally possible.\looseness=-1

\begin{definition}
    \label{def:ifakerule}
    For an agent-context~$\chi$, $i\in\agents$, and  $r\in R^\chi$, we define the following $i$-interventions: \begin{equation}
    \newfreezerule{r} \ce (\varnothing, \varnothing).
    \end{equation}  
   \vskip-5ex
    \begin{multline}
        \label{eq:echorule}
        \newechorule{i}{t}{r} \ce 
        \Bigl(\varnothing ,\qquad
        \{\failof{i}\} \quad \sqcup  \quad
        \bigl\{\,\fakeof{i}{\mistakefor{\gsend{i}{j}{\mu}{id}}{\tick}}  \qquad\big|\bigr. \\
        \bigl.\gsend{i}{j}{\mu}{id}\in \betaag{i}{r}{t}\quad\lor\quad (\exists A)\,         \fakeof{i}{\mistakefor{\gsend{i}{j}{\mu}{id}}{A}} \in \betae{r}{t} \bigr\}\Bigr) .
    \end{multline}
    \vskip-5ex
        \begin{multline}
        \label{eq:chatterrule}
        \newchatterrule{i}{t}{X}{r} \ce \Bigl(\betaag{i}{r}{t} ,\qquad \betae[i]{r}{t}\setminus \bigl\{\grecv{i}{j}{\mu}{id(j,i,\mu,k,m)} \quad\big|\quad (j,m) \notin X,\, k \in \mathbb{N}\bigr\}\Bigr).
    \end{multline}
\end{definition}

\section{The Reliable Causal Cone}
\label{sect:ByzCausalCone}

Before giving formal definitions and proofs, we first explain the intuition behind our byzantine analog of Lamport's causal cone, and the particular adjustments used for constructing runs with identical reliable causal cones in the main Lemma~\ref{lem:sources_isolation} of this section.
\looseness=-1

In the absence of faults~\cite{ChaMis86DC}, the only way information, say, about an event~$e$ that happens to an agent~$j$ can reach another agent~$i$, or more precisely, its node~$(i,t)$, is via a causal chain (time progression and delivered messages) originating from~$j$ after $e$~happened and reaching~$i$ no later than timestamp~$t$. The set of beginnings of  causal chains, together with all causal links, is called the \emph{causal cone} of~$(i,t)$. The standard way of demonstrating the necessity of such a chain for~$i$ to learn about~$e$, when expressed in our terminology, is  by using an adjustment that removes all events and actions outside the causal cone. Once  an adjusted run with no haps outside the causal cone is shown to be transitional
and the local state of~$i$ at timestamp~$t$ is shown to be the same as in the given run,  it follows that $i$~considers it possible that $e$~did not happen and, hence, does not know  that $e$~happened. This well known proof is carried out   in our framework  in~\cite{FroCoS19}  (see also~\cite{PKS19:TR} for an extended version). \looseness=-1

However,  one subtle aspect of our formalization is also relevant for the byzantine case. 
We illustrate it using a minimal example. Suppose, in the given run, $j_s$~sent exactly one message to~$j_r$ during round~\mbox{$m+{}$\textonehalf{}}  and it was correctly received by~$j_r$ in round~\mbox{$l+{}$\textonehalf{}}. At the same round, $j_r$~itself sent its last ever message, and sent it to~$i$. If this message to~$i$ arrived before~$t$, then  $(j_r,l)$~is a  node within the causal cone of~$(i,t)$. On the other hand, neither~$(j_r,l+1)$~nor~$(j_s,m)$ are within the causal cone. Thus, the run adjustment discussed in the previous paragraph removes the action of sending the message from~$j_s$ to~$j_r$, which happened outside the causal cone, and, hence, makes it causally impossible for~$j_r$ to receive it  despite the fact that the receipt happened within, or more precisely, on the boundary of the causal cone. On the other hand, the message sent by~$j_r$ in the same round cannot be suppressed without $i$~noticing. Thus, suppressing all haps on the boundary of the causal cone is not an option. These considerations necessitate the use of $\newchatterrule{j}{l}{X}{}$ to remove such ``ghosts'' of messages instead of the exact copy of round~\mbox{$l+{}$\textonehalf{}} of the given run. To obtain Chandy--Misra's result, one needs to set $X$~to be the entire causal cone.\footnote{This treatment of the cone's boundary could be perceived as overly pedantic. But in our view this is preferable to being insufficiently precise.}\looseness=-1

We now explain the complications created by the presence of byzantine faults. Because byzantine agents can lie, the existence of a causal chain is no more sufficient for reliable delivery of information. 
Causal chains can now be  \emph{reliable}, i.e.,~involve only correct agents, or \emph{unreliable}, whereby a byzantine agent can corrupt the transmitted information or even initiate the whole communication while pretending to be part of a longer chain. 
If several causal chains  link a node~$(j,m)$ witnessing an event with~$(i,t)$, where the decision based on this event is to be made,  then, intuitively, the information about the event can only be relied upon if at least one of these causal chains is reliable.
In effect, all correct nodes, i.e.,~nodes $(j,m)$ such that $(\I,r,t) \vDash \correct{(j,m)}$, are divided into three categories: those without any causal chains to~$(i,t)$, i.e.,~nodes outside Lamport's causal cone, those with causal chains but only  unreliable ones, and those with at least one reliable causal chain. 
There is, of course, the fourth category consisting of byzantine nodes, i.e.,~nodes~$(j,m)$ such that  $(\I,r,t) \nvDash \correct{(j,m)}$. 
Since there is no way for nodes without reliable causal chains  to make themselves heard, we call these nodes \emph{silent masses} and apply  to them the $\newfreezerule{}$ intervention: since they cannot have an impact, they need not act.
The nodes with at least one reliable causal chain to~$(i,t)$, which must be correct themselves, form the \emph{reliable causal cone} and  treated the same way as Lamport's causal cone in the fault-free case, except that the removal of ``ghost'' messages is more involved in this case. Finally, the remaining nodes are byzantine and form a \emph{fault buffer} on the way of reliable information. Their role is to pretend the run is the same independently of what the silent masses do. 
We will show that  $\newechorule{j}{m}{}{}$ suffices since only messages sent from the fault buffer matter.
  \looseness=-1

Before stating our main Lemma~\ref{lem:sources_isolation}, which constructs an adjusted run that leaves agent~$i$ at~$t$ in the same position while removing as many haps as possible, it should be noted that our analysis relies on \emph{knowing which agents are byzantine} in the given run, which may easily change without affecting local histories. This assumption will be dropped in the following section.
\looseness=-1

First we define simple causal links among nodes as binary relations on  $\agents\times \bbbt$ in  infix notation:\looseness=-1
\begin{definition}
    \label{def:causal_graph}
    For all $ i \in \agents$ and $t \in \bbbt$, we have
    $(i,t)\edgLoc(i,t+1)$.
    Additionally, for a run~$r$, we have 
    $(i,m)\edgMsg{r}(j,l)$ if{f} 
    there are $\mu \in Msgs$ and $id \in \bbbn$ such that 
    $\grecv{j}{i}{\mu}{id}\in \betae{r}{l-1}$ and 
    either $\gsend{i}{j}{\mu}{id}\in \betaag{i}{r}{m}$ or 
    $\fakeof{i}{\mistakefor{\gsend{i}{j}{\mu}{id}}{A}} \in \betae{r}{m}$ 
    for some $A \in \{\tick\}\sqcup\gactions[i]$. 
    \emph{Causal $r$-links} $\edgC{r} \ce \edgLoc \cup \edgMsg{r}$ are either local or communication related. 
    A \emph{causal $r$-path}  for a run~$r$ is a sequence $\xi = \langle\theta_0,\theta_1,\dots,\theta_k\rangle$, $k\geq 0$, of nodes connected by causal $r$-links, i.e., such that $\theta_l \edgC{r} \theta_{l+1}$ for each $0 \leq l <k$. 
    This causal $r$-path is called \emph{reliable} if{f} node $(j_l,t_l+1)$ is correct in~$r$ for each $\theta_l = (j_l,t_l)$ with $0\leq l < k$ and, additionally, node $\theta_k= (j_k,t_k)$ is correct in~$r$. 
    We also write $\pastpath[\xi]{r}{\theta_0}{\theta_k}$ to denote the fact that path~$\xi$ connects node~$\theta_0$ to~$\theta_k$ in run~$r$, or simply $\pastpath{r}{\theta_0}{\theta_k}$ to state that such a causal $r$-path exists.\looseness=-1
\end{definition}
Note that neither receives nor sends of messages forming a reliable causal $r$-path can be byzantine. The latter is guaranteed by the immediate future of nodes on the path being correct.
    \begin{definition}    
    The \emph{reliable causal cone~$\truecausalcone{r}{\theta}$  of node~$\theta$ in run~$r$}          consists of all nodes $\zeta \in \agents \times\bbbt$ such that $\pastpath[\xi]{r}{\zeta}{\theta}$ for some reliable causal $r$-path~$\xi$. The \emph{fault buffer~$\faultbuffer{r}{\theta}$ of node~$\theta$ in run~$r$} consists of all nodes~$(j,m)$ with $m<t$ such that $\pastpath{r}{(j,m)}{\theta}$ and  $(j,m+1)$ is not correct. Abbreviating $\havevoice{r}{\theta}\ce\truecausalcone{r}{\theta} \sqcup \faultbuffer{r}{\theta}$, the  \emph{silent masses of node~$\theta$ in run~$r$} are all the remaining nodes $\silentmasses{r}{\theta}\ce(\agents \times \bbbt) \setminus \havevoice{r}{\theta}$.\looseness=-1
\end{definition}

Here the filling of the cone~$\truecausalcone{}{}$ signifies reliable communication, $\faultbuffer{}{}$~represents a barrier for correct information, whereas $\silentmasses{}{}$~depicts correct information isolated from its destination.
We can now state the main result of this section:

\begin{lemma}[Cone-equivalent run construction]
    \label{lem:sources_isolation}
    For $f \in \mathbb{N}$, for a non-excluding agent-context  $\chi=\bigl((\envprotocol{}, \globalinitialstates, \tau_f,\Psi),\joinprotocol{}\bigr)$ such that all agents     are gullible, correctable, and delayable,   
    for any $\transitionExt{f}{}$-tran\-sit\-ional run~$r$, 
    for a node $\theta=(i,t)\in\points$ correct in~$r$,
    let adjustment $\adj = [B_{t-1}; \dots ;B_0]$ where
    $
    B_{m} = (\rho_1^{m}, \dots, \rho_n^{m})$
    for each $0 \leq m \leq  t-1$ such that\looseness=-1
    \begin{equation}
    \label{eq:Byz_causal_adj}
    \rho^m_j \ce 
    \begin{cases}
   \newchatterrule{j}{m}{ \havevoice{r}{\theta}}{}& \text{if $(j,m)\in \truecausalcone{r}{\theta}{}  $},
    \\
   \newechorule{j}{m}{} & \text{if $(j,m)\in \faultbuffer{r}{\theta}$},
    \\
    \newfreezerule{}{}{} & \text{if $(j,m)\in  \silentmasses{r}{\theta} $}.
    \end{cases}
    \end{equation}
  Then
    each  $r'\in\SystemBuildFrom{\transitionExt{f}{}}{r}{\adj}$ satisfies the following properties: \looseness=-1
    \begin{compactenum}[(A)]
        \item\label{lem:multipede:same} $(\forall (j,m)\in \truecausalcone{r}{\theta})\quad \run[']{j}{m}=\run{j}{m}$;\looseness=-1
\item\label{lem:multipede:same:i} 
$(\forall m\leq t)\quad r'_i(m) = r_i(m)$;
        \item\label{lem:multipede:bad} 
        for any $m\leq t$, we have that  $\betae{r'}{m-1} \cap \fevents[j] \ne \varnothing$  if{f} both $\pastpath{r}{(j,m-1)}{\theta}$ and  $(j,m)$ is not correct in~$r$;\looseness=-1
                \item\label{lem:multipede:nomorefaulty} for any $m \leq t$, any node~$(j,m)$ correct in~$r$ is also correct in~$r'$; \looseness=-1
                \item\label{lem:multipede:nomorethanf}  the number of agents byzantine by any $m \leq t$ in run~$r'$ is not greater than that in run~$r$ and is~$\leq f$;\looseness=-1
      \item\label{lem:multipede:trans} $r'$~is $\transitionExt{f}{}$-transitional.\looseness=-1
               \end{compactenum}
\end{lemma}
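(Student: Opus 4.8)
The proof breaks into the six claims (A)--(F), and the natural order is dictated by dependencies: I would first establish (A) and (B) by a joint induction on $m$, then use (A)/(B) together with a careful case analysis over the partition $\silentmasses{r}{\theta} \sqcup \faultbuffer{r}{\theta} \sqcup \truecausalcone{r}{\theta}$ to get (C) and (D), then derive (E) as a quick consequence of (D) plus the bound $f$ in $r$, and finally assemble (F) from (C)--(E).

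For (A), the plan is an induction on $m$ showing simultaneously that $r'_j(m) = r_j(m)$ for every $(j,m) \in \truecausalcone{r}{\theta}$ and — as the needed strengthening — that for such nodes the sets $\betaag{j}{r'}{m}$ and $\betae[j]{r'}{m}$ agree with the originals restricted to what $\newchatterrule{j}{m}{\havevoice{r}{\theta}}{}$ prescribes. The key point is that the reliable cone is downward closed under $\edgC{r}$: if $(j,m) \in \truecausalcone{r}{\theta}$ then every node with a causal $r$-link into it is also in the cone (or the action/event in question is one that $\newchatterrule{}{}{}{}$ faithfully copies). In particular, every message received along a causal link inside the cone was sent by a node inside the cone, so by the induction hypothesis that send is still present in $r'$ and the receive survives filtering (the $\filtere[B]{}{}$ stage cannot remove it). Conversely, any receive in round $m-\frac12$ by a cone node whose matching send lies outside $\havevoice{r}{\theta}$ is exactly a ``ghost'' that $\newchatterrule{}{}{}{}$ was designed to delete, so no spurious receive appears either. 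One also checks the $\filtere[\le f]{}{}$ stage is irrelevant for cone nodes since $\newchatterrule{}{}{}{}$ introduces no byzantine events. Then $\update$ applied to equal local states with equal $\beta$-sets yields equal local states at $m+1$. Claim (B) is the special case $j = i$ for $m \le t$ (note $\theta = (i,t) \in \truecausalcone{r}{\theta}$ trivially, and all its predecessors $(i,m)$, $m < t$, lie in the cone via the $\edgLoc$ chain), with the round-$t$ step handled by clause (d) of Definition~\ref{def:build_from} after matching the round-$(t-1)$ data.

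For (C), I would argue both directions. The ``only if'' direction: a byzantine event of agent $j$ in round $m-\frac12$ of $r'$ can only come from $\rho^{m-1}_j$, and by the definition of the adjustment this is nonempty only in the $\newechorule{j}{m-1}{}{}$ case, i.e.\ when $(j,m-1) \in \faultbuffer{r}{\theta}$, which by definition of the fault buffer means exactly $\pastpath{r}{(j,m-1)}{\theta}$ and $(j,m)$ not correct in $r$. The ``if'' direction: if $(j,m-1) \in \faultbuffer{r}{\theta}$ then $\rho^{m-1}_j = \newechorule{j}{m-1}{}{}$ contains $\failof{j} \in \bevents[j] \subseteq \fevents[j]$, and this event survives the first filtering stage precisely because of (E) (the total byzantine count stays $\le f$) — so this is mildly circular and I would instead prove the bound of (E) first in the form needed here, or interleave. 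Claim (D) then follows: a node $(j,m)$ correct in $r$ has, for every $m' < m$, $\betae{r'}{m'-1} \cap \fevents[j] = \varnothing$ — indeed by (C) a byzantine $j$-event at $m'$ would force $(j,m')$ byzantine in $r$, contradicting correctness of $(j,m)$ which is ``downward persistent'' in the $\correct{}$ predicate. For (E): the set of agents byzantine by round $m$ in $r'$ is, by (C), contained in the set of agents byzantine by round $m$ in $r$ (each new byzantine event sits on a node already byzantine in $r$), hence bounded by $f$.

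The main obstacle I anticipate is (F), transitionality, and within it the fault-buffer and silent-mass regions, which (A)/(B) do not touch. For rounds $m < t$ I must verify that the $\beta$-sets dictated by $\adj$ are exactly what the filtering functions would produce from the $\alpha$-sets, i.e.\ that $\adj$ has only removed causally-impossible haps and nothing that the protocol-plus-adversary could not also have withheld. This is where gullibility, correctability and delayability are used: $\newfreezerule{}$ needs delayability (the adversary may serve the empty event set to a silent node, and suppress its actions) together with correctability at nodes that were byzantine in $r$; $\newechorule{j}{m}{}{}$ needs gullibility to inject the $\failof{j}$ and the byzantine ghost-sends in place of $j$'s real events, and here the $t$-coherency condition (b) must be checked — the replaced events are byzantine sends, and we must ensure no correct event of the same local form is simultaneously forced, which holds because we removed $j$'s real events. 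The delicate sub-point is that a ghost-send $\fakeof{j}{\mistakefor{\gsend{j}{k}{\mu}{id}}{\tick}}$ carries a GMI $id$ that must match what a correct same-round send would get (coherency (c)); since $\newechorule{}{}{}{}$ copies exactly the GMIs from $\betaag{j}{r}{m}$ or $\betae{r}{m}$, and round $m$ has the same timestamp in $r'$ as in $r$, this matches. Finally for rounds $T' \ge t$ transitionality is immediate from clause (d) of Definition~\ref{def:build_from}. I would also flag that the filtering at a cone node can depend on actions of non-cone agents in the same round (the $\filtere{}{}$ inputs include all $\alphaag{j}{r}{\cdot}$), so part of the (A)-induction must simultaneously track that those cross-round inputs relevant to cone receives are unchanged — concretely, that any send feeding a cone receive is itself in the cone, which is the downward-closure observation again.
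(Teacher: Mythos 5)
Your plan follows the paper's proof essentially step for step: (A)/(B) by induction on $m$ using downward closure of the reliable cone under causal $r$-links, (C)--(E) read off from which intervention type produces byzantine events, and the bulk of the work in (F) via an explicit choice of $\alpha$-sets and a three-region case analysis showing the filters reproduce the prescribed $\beta$-sets, with gullibility/correctability/delayability and GMI-coherency invoked exactly where you place them. The one slip --- a causal predecessor of a cone node need only lie in $\havevoice{r}{\theta}$ and may therefore sit in the fault buffer rather than the cone --- is harmless, since your later treatment of the byzantine ghost-sends grounding cone receives handles precisely that case.
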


\begin{proof}[Proof sketch] The following  properties follow from the definitions: 
\begin{gather}
\truecausalcone{r}{\theta} \cap {\faultbuffer{r}{\theta}} =\varnothing, \qquad\qquad \theta \in \truecausalcone{r}{\theta},
\\
\label{eq:causal_past_of_well_connected}
(j,m)\in\truecausalcone{r}{\theta} \quad \& \quad (k,m') \edgC{r} (j,m) \qquad \Longrightarrow\qquad (k,m') \in {\textstyle\havevoice{r}{\theta}}.
\end{gather} 
 \looseness=-1

Note that for $\beta_k = (j_k,m_k)$ with $k=1,2$, we have $\beta_1 \edgC{r} \beta_2$ implies $m_1 < m_2$ and $\pastpath{r}{\beta_1}{\beta_2}$ implies $m_1 \leq m_2$. Thus, all parts of the lemma except for Statement~\eqref{lem:multipede:trans} only concern $m \leq t$, and even this last statement for $m>t$ is a trivial corollary of Def.~\ref{def:build_from}(d). Thus, we focus on  $m\leq t$.\looseness=-1

Statement~\eqref{lem:multipede:same} can be proved by induction on~$m$ using the following auxiliary lemma for the given transitional run~$r$ and the adjusted run~$r'$, which is also constructed using the standard update functions.\looseness=-1
\begin{lemma}
\label{lem:agree_on}
If $r_j(m) = r'_j(m)$, and~$\betaag{j}{r}{m}=\A\rho_j^{m}(r)$, and~$\betae[j]{r}{m}=\E\rho_j^{m}(r)$, then $r_j(m+1)=r'_j(m+1)$.\looseness=-1
\end{lemma}
\begin{proof}
This statement follows from~\eqref{eq:run_trans_ag} for the transitional run~$r$, Def.~\ref{def:build_from}\eqref{clause:run_adj_ag} for the adjusted run~$r'$, and the fact that  $\updateag{j}{}{}{}$ only depends on events of agent~$j$, in particular, on the presence of $go(j)$ or $\sleep{j}$ (see Def.~\ref{def:state-update} in the appendix for details).\looseness=-1
\end{proof}
The third condition of Lemma~\ref{lem:agree_on} is satisfied for 
$\rho_j^{m}(r)=\newchatterrule{j}{m}{ \havevoice{r}{\theta}}{}$  
within~$\truecausalcone{r}{\theta}$ by~\eqref{eq:causal_past_of_well_connected}. Further, if $(j,m) \in \truecausalcone{r}{\theta}$, then so are all~$(j,m')$ with $m'\leq m$. In particular, $(i,m') \in \truecausalcone{r}{\theta}$ for any $m' \leq t$. Thus, Statement~\eqref{lem:multipede:same:i} follows from Statement~\eqref{lem:multipede:same} we have already proved.\looseness=-1

Statement~\eqref{lem:multipede:bad} is due to the fact that (a)~$\newchatterrule{j}{m}{ \havevoice{r}{\theta}}{}$ 
does not produce any new byzantine events relative to $\betae[j]{r}{m}$, which contains none for $(j,m) \in \truecausalcone{r}{\theta}$, (b)~$\newfreezerule{}{}{}$ never produces byzantine events, whereas (c)~$\newechorule{j}{m}{}$ always contains at least $\failof{j}\in\bevents[j]$. Statements~\eqref{lem:multipede:nomorefaulty} and~\eqref{lem:multipede:nomorethanf} are direct corollaries of Statement~\eqref{lem:multipede:bad}.\looseness=-1

The bulk of the proof concerns Statement~\eqref{lem:multipede:trans}, or, more precisely the transitionality up to timestamp~$t$.  For each  $m<t$, we need sets $\alphae{r'}{m} \in \envprotocol{m}$ and $\alphaag{j}{r'}{m}= \{\glob{j,m}{a} \mid a \in X_j\}$ for some $X_j\in P_j(r'_j(m))$ for each $j \in \agents$ such that for $\betae{r'}{m} = \bigsqcup_{j \in\agents}\E\rho_j^{m}(r)$ and $ \betaag{j}{r'}{m} = \A\rho_j^{m}(r)$ for all $j \in \agents$,\looseness=-1
        \begin{gather}
        \label{eq:trans_filter_desir_env}
         \betae{r'}{m} = \filtere{r'(m)}{\alphae{r'}{m}, \alphaag{1}{r'}{m}, \dots , \alphaag{n}{r'}{m}},
         \\
        \label{eq:trans_filter_desir_ag}
        \betaag{j}{r'}{m} = \filterag{j}{\alphaag{1}{r'}{m}, \dots , \alphaag{n}{r'}{m} }{\betae{r'}{m}}.
        \end{gather}
         The construction of such $\alpha$-sets and the proof of~\eqref{eq:trans_filter_desir_env}--\eqref{eq:trans_filter_desir_ag} for them is by induction on~$m$. Note that $r'_\epsilon(0)=r_\epsilon(0)$ and $r'_j(0)=r_j(0)$ for all $j \in \agents$ by Def.~\ref{def:build_from}(a). We will show that it suffices to choose\looseness=-1
        \begin{equation}
        \label{eq:choosing_new_alpha_ag}
         \alphaag{j}{r'}{m} \ce
         \begin{cases}
         \alphaag{j}{r}{m} & \text{if $(j,m) \in \truecausalcone{r}{\theta}$},
         \\
         \{\glob{j,m}{a} \mid a \in X_j\}
         \text{ for some $X_j \in P_j(r'_j(m))$} & \text{otherwise},
         \end{cases}
         \end{equation}
         with the choice in the latter case possible by $P_j(r'_j(m)) \ne \varnothing$, and 
       \looseness=-1
  \begin{multline}
        \label{eq:choosing_new_alpha_env}
        \alphae{r'}{m} \ce
\Bigl(\filtere[\leq f]{r(m)}{\alphae{r}{m}, \alphaag{1}{r}{m}, \dots, \alphaag{n}{r}{m}}
 \setminus 
\!\!\!\!\bigsqcup\limits_{\scriptscriptstyle(l,m) \in\silentmasses{r}{\theta} \sqcup \faultbuffer{r}{\theta} }\!\!\!\! \gevents[l] \Bigr)
\sqcup 
\{\failof{l}\mid (l,m) \in {\textstyle\faultbuffer{r}{\theta}}\} \sqcup \\
\Bigl\{\fakeof{l}{\mistakefor{\gsend{l}{j}{\mu}{id}}{\tick}}  \quad\Big|\quad
(l,m) \in {\textstyle\faultbuffer{r}{\theta}}\quad\&\quad
\\
\Bigl(\gsend{l}{j}{\mu}{id}\in \betaag{l}{r}{m}  \quad\lor\quad
        (\exists A 
        )\,\fakeof{l}{\mistakefor{\gsend{l}{j}{\mu}{id}}{A}} \in \betae{r}{m}\Bigr)  \Bigr\}.
\end{multline}
Informally, according to~\eqref{eq:choosing_new_alpha_ag}, in~$r'$, we just repeat the choices made in~$r$ within the reliable causal cone and make  arbitrary choices elsewhere. 
 According to~\eqref{eq:choosing_new_alpha_env}, events are chosen in a more complex way.
First, mimicking the  1st-stage filtering in the given run~$r$, the originally chosen $\alphae{r}{m}\in \envprotocol{m}$ is preventively purged of all byzantine events whenever they  would have caused more than $f$~agents to become faulty in~$r$. Note that,  in our transitional simulation of the adjusted run~$r'$, this is done prior to filtering~\eqref{eq:trans_filter_desir_env} by exploiting the  correctability of all agents.
Secondly, for all agents~$l$ outside the reliable causal cone at the current timestamp~$m$, i.e.,~with $(l,m) \in \silentmasses{r}{\theta} \sqcup \faultbuffer{r}{\theta}$, all events are removed, to comply with the total freeze among the silent masses~$\silentmasses{r}{\theta}$ and to make room for byzantine communication in the fault buffer~$\faultbuffer{r}{\theta}$. 
The resulting set complies with~$\envprotocol{}$ because all agents are delayable. For the silent masses, this is the desired result. 
For the fault buffer, on the other hand, byzantine sends are added for every correct or byzantine send in~$r$, thus, ensuring that the incoming information in the reliable causal cone in~$r'$ is the same as in~$r$. 
For the case when a faulty buffer node~$(l,m)$ sent no messages in the original run, $\failof{l}$ is added to make  the immediate future $(l,m+1)$ byzantine despite its silence, which is crucial for fulfilling Statement~\eqref{lem:multipede:bad} and simplifying bookkeeping for byzantine agents.\looseness=-1

The proof of~\eqref{eq:trans_filter_desir_env}--\eqref{eq:trans_filter_desir_ag} is by induction on $m =0,\dots,t-1$. 
To avoid overlong formulas, we abbreviate the right-hand side of~\eqref{eq:trans_filter_desir_env} by~$\Upsilon^m_\epsilon$ and the right-hand sides of~\eqref{eq:trans_filter_desir_ag}  for each $j \in \agents$ by~$\Xi_j^m$ for the specific~$\alphaag{j}{r'}{m}$~and~$ \alphae{r'}{m}$ defined in~\eqref{eq:choosing_new_alpha_ag}~and~\eqref{eq:choosing_new_alpha_env} respectively. 
Thus, it only remains  to show that $\betae{r'}{m}= \Upsilon^m_\epsilon$ and $(\forall j \in \agents)\,\betaag{j}{r'}{m} = \Xi_j^m$, or equivalently, further abbreviating $\Upsilon_j^m \ce  \Upsilon^m_\epsilon \cap \gevents[j]$,  that
\[
 \betae[j]{r'}{m} = \Upsilon_j^m
 \qquad\text{and}\qquad 
 \betaag{j}{r'}{m} = \Xi_j^m
 \]  for all $j \in \agents$, by simultaneous induction on~$m$. 
\looseness=-1

\emph{Induction step for the silent masses}  $(j,m) \in \silentmasses{r}{\theta}$. By~\eqref{eq:choosing_new_alpha_env},  $\alphae[j]{r'}{m}\ce \alphae{r'}{m} \cap \gevents[j] = \varnothing$, and filtering it yields $\Upsilon_j^m=\varnothing=\betae[j]{r'}{m}$ as prescribed by $\newfreezerule{}{}{}$.
 In particular, $go(j) \notin\betae[j]{r'}{m}$, thus, ensuring that filtering $\alphaag{j}{r'}{m}$, whatever it is, yields $\Xi_j^m=\varnothing = 
	\betaag{j}{r'}{m}$, once again in compliance with $\newfreezerule{}{}{}$ applied within $\silentmasses{r}{\theta}$.\looseness=-1

Before proceeding with the induction step for the remaining nodes, observe that events in $\alphae{r'}{m}$, if added to $r'(m)$, do not cross the byzantine-agent threshold~$f$, meaning that the 1st-stage filtering does not affect $\alphae{r'}{m}$:\looseness=-1
\begin{equation}
\label{eq:filter_futile}
\filtere[\leq f]{r'(m)}{\alphae{r'}{m},\alphaag{1}{r'}{m},\dots,\alphaag{n}{r'}{m}} = \alphae{r'}{m}.
\end{equation}
 Indeed there are two sources of byzantine events in  $\alphae{r'}{m}$: byzantine events from $\alphae{r}{m}$ that survived $\filtere[\leq f]{}{}$ in~\eqref{eq:choosing_new_alpha_env} and those pertaining to nodes in the fault buffer~$\faultbuffer{r}{\theta}$. 
The former were also present in~$\betae{r}{m}$ in the original run because the 2nd-stage filter $\filtere[B]{}{}$ only removes correct (receive) events. 
At the same time, for any $(l,m) \in \faultbuffer{r}{\theta}$, the immediate future $(l,m+1)$ was a faulty node in~$r$ by the definition of $\faultbuffer{r}{\theta}$. In either case, any agent  faulty in~$r'$ based on $\alphae{r'}{m}$ was also faulty by timestamp~\mbox{$m+1$} in~$r$. 
Additionally, any agent already  faulty  in $r'(m)$ was also faulty in $r(m)$ by Statement~\eqref{lem:multipede:nomorefaulty}. Since the number of agents faulty by~\mbox{$m+1$} in the original transitional run~$r$  could not exceed~$f$, adding $\alphae{r'}{m}$ to~$r'(m)$ does not exceed this threshold either. It follows from~\eqref{eq:filter_futile} that\looseness=-1
\begin{equation}
\label{eq:filter_only_B}
\Upsilon_{j}^m = \mathit{filter}^B_\epsilon(r'(m),\alphae{r'}{m},\alphaag{1}{r'}{m},\dots,\alphaag{n}{r'}{m}) \cap \gevents[j].
\end{equation}
\looseness=-1

\emph{Induction step for the fault buffer}  $(j,m) \in \faultbuffer{r}{\theta}$. For these nodes, the $\alphae[j]{r'}{m}$ part of  $\alphae{r'}{m}$ contains no correct events, hence, $ \filtere[B]{}{}$, which  only removes correct receives, has no effect. In other words,\looseness=-1
\begin{multline*}
\Upsilon_j^m\quad =\quad \alphae{r'}{m} \cap\gevents[j]\quad=\quad  \{\failof{j}\}\quad \sqcup\quad
\Bigl\{\fakeof{j}{\mistakefor{\gsend{j}{h}{\mu}{id}}{\tick}}  \,\Big|\\
\gsend{j}{h}{\mu}{id}\in \betaag{j}{r}{m}  \lor 
        (\exists A 
        )\,\fakeof{j}{\mistakefor{\gsend{j}{h}{\mu}{id}}{A}} \in \betae{r}{m}  \Bigr\}  \quad=\quad \betae[j]{r'}{m}
        \end{multline*}
   as    prescribed by $\newechorule{j}{m}{}$. As in the case of the silent masses,  $go(j)\notin\betae{r'}{m}$ guarantees that the $\Xi_j^m = \varnothing = \betaag{j}{r'}{m}$ requirement is fulfilled within~$\faultbuffer{r}{\theta}$.\looseness=-1
        
\emph{Induction step for the reliable causal cone}  $(j,m) \in \truecausalcone{r}{\theta}$. The case of the nodes with a reliable causal path to~$\theta$, whose immediate future remains correct in~$r$, is the final and also most complex induction step.  Recall that $\alphaag{j}{r'}{m} = \alphaag{j}{r}{m} \in P_j(r(m)) = P_j(r'(m))$ because within $\truecausalcone{r}{\theta}$ by Statement~\eqref{lem:multipede:same} $r'(m)=r(m)$. 
Thus, our choice of $\alphaag{j}{r'}{m}$ in~\eqref{eq:choosing_new_alpha_ag} is in compliance with transitionality. 
Since $(j,m+1)$ is correct, the $\alphae[j]{r}{m}\ce\alphae{r}{m}\cap \gevents[j]$ part of $\alphae{r}{m}$ contained no byzantine events and, hence, is unchanged by~\eqref{eq:choosing_new_alpha_env}. For the same reason it is not affected by 1st-stage filtering in either run. 
Thus,  the same set of $j$'s~events undergoes the 2nd-stage filtering in both the original run~$r$ and in our transitional simulation of the adjusted run~$r'$. 
Let us call this set of $j$'s~events~$\Omega_j$.\looseness=-1

Since both $\filtere[B]{}{}$ and $\newchatterrule{j}{m}{\havevoice{r}{\theta}}{}$ can only remove receive events, it immediately follows that  $\betae[j]{r'}{m}\subseteq \betae[j]{r}{m}$ and~$\Upsilon_j^m$  agree on all non-receive events.
Importantly, this includes $go(j)$ events, thus ensuring that $\Xi_j^m = \alphaag{j}{r}{m}$.\looseness=-1

A receive event $U=\grecv{j}{k}{\mu}{id} \in \Omega_j$ is retained in either run if{f} it is causally grounded by a matching send, correct or byzantine. 
Due to the uniqueness of GMI~$id$, as ensured by the injectivity of both~$id$ and $\mathit{global}$ functions, as well as Condition~(c) of the $t$-coherency of sets produced by $\envprotocol{}$, there is at most one agent~$k$'s node where such a matching send can originate from. 
If $id$~is not well-formed and no such send can exist, $U$~is filtered out from both $\betae{r}{m}$ and~$\Upsilon_\epsilon^m$, the former ensuring $U \notin \betae{r'}{m}$. 
The reasoning in the case such a node $\eta=(k,z)$ exists depends on where timestamp~$z$ is relative to~$m$ and where $\eta$~falls in our partition of nodes. 
Generally, to retain~$U$ in $\betae[j]{r}{m}$ and~$\Upsilon_j^m$, one must find  either  a correct send $V \ce \gsend{k}{j}{\mu}{id}$ or a faulty send $W_A \ce \fakeof{k}{\mistakefor{V}{A}}$ for some $A \in \gactions[k]\sqcup \{\tick\}$.\looseness=-1
\begin{compactitem}
\item  If $z > m$ is in the future of~$m$, then $U$~is filtered out from both $\betae{r}{m}$ and~$\Upsilon_\epsilon^m$, hence, $U \notin \betae{r'}{m}$.\looseness=-1
\item If $z\leq m$ and $\eta \in \silentmasses{r}{\theta}$, then, independently of filtering in~$r$, hap  $U \notin  \E\bigl(\newchatterrule{j}{m}{\havevoice{r}{\theta}}{r}\bigr) = \betae[j]{r'}{m}$ because the message's origin is outside the focus area. At the same time, no actions or events are scheduled at~$\eta$ in~$r'$ (for $z=m$ it follows from the already proven induction step for silent masses). Without either~$V$~or~$W_A$, event~$U$ is filtered out from $\Upsilon_\epsilon^m$.\looseness=-1
\item If $z < m$ and $\eta \in \faultbuffer{r}{\theta}$, then, by~\eqref{eq:echorule} in the definiton of $\newechorule{k}{z}{}$, only $W_\tick$~can save~$U$ in~$\Upsilon^m_j$ and  $W_{\tick}\in\betae[k]{r'}{z}$ if{f} either $V \in \betaag{k}{r}{z}$ or $W_A \in \betae[k]{r}{z}$ for some~$A$. Thus, filtering~$U$ yields the same result  in both runs, and $\newchatterrule{j}{m}{\havevoice{r}{\theta}}{}$ does not affect~$U$ because $\eta \in \havevoice{r}{\theta}$.\looseness=-1
\item If $z = m$ and $\eta \in \faultbuffer{r}{\theta}$, again only~$W_\tick$ can save~$U$ in~$\Upsilon^m_j$, this time by construction~\eqref{eq:choosing_new_alpha_env} of $\alphae{r'}{m}\notni go(k)$. 
Here $W_{\tick}\in\alphae[k]{r'}{m}$ if{f} either $V \in \betaag{k}{r}{z}$ or $W_A \in \betae[k]{r}{z}$ for some~$A$. Thus, filtering~$U$ yields the same result  in both runs, and $\newchatterrule{j}{m}{\havevoice{r}{\theta}}{}$ does not affect~$U$ because $\eta \in \havevoice{r}{\theta}$.\looseness=-1
\item If $z < m$ and $\eta \in \truecausalcone{r}{\theta}$, then $(k,z+1)$ is still correct in both~$r$~and~$r'$, hence,  no byzantine events such as~$W_A$ are present in either  $r(m)$ or $r'(m)$. Accordingly, only~$V$ can save~$U$ in this case. Since $\betaag{k}{r}{z}=\betaag{k}{r'}{z}$ by construction~\eqref{eq:chatterrule} of $ \newchatterrule{k}{z}{\havevoice{r}{\theta}}{}$, filtering~$U$ yields the same result  in both runs, and $\newchatterrule{j}{m}{\havevoice{r}{\theta}}{}$ does not affect~$U$ because $\eta \in \havevoice{r}{\theta}$.
\looseness=-1
\item If $z = m$ and $\eta \in \truecausalcone{r}{\theta}$, again $(k,m+1)$ is correct in $r$ meaning this time  that no  $W_A$ are present in~$\Omega_k$. Again, only $V$ can save $U$ from filtering. Since $\alphaag{k}{r}{m}=\alphaag{k}{r'}{m}$ by construction~\eqref{eq:choosing_new_alpha_ag} and the sets of events being filtered agree on $go(k) \in \Omega_k$, here too filtering $U$ yields the same result  in both runs, and $\newchatterrule{j}{m}{\havevoice{r}{\theta}}{}$ does not affect $U$  because $\eta \in \havevoice{r}{\theta}$.\looseness=-1
\end{compactitem}
This case analysis completes the induction step for the reliable causal cone, the induction proof, proof of Statement~\eqref{lem:multipede:trans}, and the proof of the whole Lemma~\ref{lem:sources_isolation}.\looseness=-1
\end{proof}

\section{Preconditions for Actions: Multipedes}
\label{sect:PrecActions}

Arguably the most important application of Lemma~\ref{lem:sources_isolation}, and,
hence, of causal cones, is  to derive preconditions for agents'
actions, cp.~\cite{ben2014beyond}. While  relatively simple in traditional settings, where events can be preconditions according to the
knowledge of preconditions principle~\cite{Mos15TARK}  and where Lamport's causal
cone  suffices, this is no longer true
in byzantine settings. As Theorem~\ref{lem:no-k-occurred} reveals, if $f>0$, an asynchronous agent can learn neither that it is (still) correct
  nor that a particular event\footnote{Actually, the reasoning in this section
also extends to actions, i.e., arbitrary haps.} really occurred.\looseness=-1

\begin{theorem}[\cite{FroCoS19}]\label{lem:no-k-occurred}
If $f\geq1$, then for any $o \in \events[i]$,  for any interpreted system $\I=(R^\chi,\pi)$ with
 any non-excluding agent-context $\chi=((\envprotocol{},\globalinitialstates,\tau_f,\Psi),P)$ where $i$~is gullible   and  every  $j\ne i$  is delayable,
 \looseness=-1
\begin{equation}  
\label{eq:lack_self_knowledge}
\I \models \lnot \K{i}{\trueoccurred{o}}
\qquad 
\text{and}\qquad
\I \models \lnot \K{i}{\correct{i}}
.
\end{equation}
\end{theorem}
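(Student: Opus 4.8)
The plan is to establish both claims simultaneously, for an arbitrary interpreted system $\I=(R^\chi,\pi)$, run $r\in R^\chi$, and timestamp $t$, by a single ``brain-in-a-vat'' run modification: I would produce a run $r'\in R^\chi$ and a timestamp $t'$ with $\run[']{i}{t'}=\run{i}{t}$ (so $(r',t')$ is $i$-indistinguishable from $(r,t)$) in which, up to timestamp $t'$, agent~$i$ perceives nothing but byzantine events and every agent $j\neq i$ is completely idle. At such a node $\trueoccurred{o}$ is false — no node of any agent has a \emph{correct} reason to believe anything — and $\correct{i}$ is false — a faulty event has happened to~$i$ — so neither $\K{i}{\trueoccurred{o}}$ nor $\K{i}{\correct{i}}$ can hold at $(r,t)$; since $(r,t)$ is arbitrary, this gives $\I\models\lnot\K{i}{\trueoccurred{o}}$ and $\I\models\lnot\K{i}{\correct{i}}$. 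For $t\geq 1$ one takes $t'=t$; the degenerate case $t=0$, where $\correct{(i,0)}$ holds vacuously, is handled by inserting a single $\hib{i}$ into the first round of $r'$ and reading off at $t'=1$. Note that, unlike Lemma~\ref{lem:sources_isolation}, here nothing of~$r$ needs to be preserved, which is exactly why the weaker hypotheses (only $i$ gullible, only $j\neq i$ delayable) suffice.

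To build $r'$ I would run an adjustment (Def.~\ref{def:build_from}) that applies $\newfreezerule{}$ to every agent $j\neq i$ in each of the rounds $0,\dots,t-1$ — legitimate because each such $j$ is delayable, so deleting all of $j$'s events keeps the chosen event set in $\envprotocol{m}$ — and applies to agent~$i$ a new ``puppet'' intervention reproducing $i$'s round-$m$ behaviour in~$r$ hap for hap via byzantine events only: a perceived event $e$ is reproduced by $\fakeof{i}{E}$ with $\mathit{local}(E)=e$, a recorded action $a$ by $\fakeof{i}{\mistakefor{\tick}{\glob{i,m}{a}}}$, and a round that in~$r$ updated $i$'s history only through a system event is marked by a lone $\sleep{i}$ (and, if $i$ is idle through~$t$ in~$r$, the single $\hib{i}$ mentioned above is inserted so $i$ still turns byzantine). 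Let $Y_m\subseteq\fevents[i]$ be the resulting set of $i$-events. Since $i$ is gullible, $Y_m\in\envprotocol{m}$ — here one also checks $t$-coherency of $Y_m$, in particular that the GMIs of the byzantine copies of $i$'s sends are well-formed for round~$m$, which is inherited from~$r$. The filtering phase then has no effect: $\filtere[\leq f]{}{}$ leaves $Y_m$ intact because $i$ is the only byzantine agent and $f\geq 1$; $\filtere[B]{}{}$ leaves it intact because $i$ has no \emph{correct} receives and no other agent has any events; and the agent-action filter discards every action since no $go(\cdot)$ is ever issued. Hence $\betae{r'}{m}=Y_m$ and $\betaag{j}{r'}{m}=\varnothing$ for all $j$, and since $\updateag{i}{}{}{}$ depends only on $i$'s events and performed actions, an induction over $m\leq t$ (cf.~Lemma~\ref{lem:agree_on}) yields $\run[']{i}{m}=\run{i}{m}$ and $\run[']{j}{m}=\run{j}{0}$ for $j\neq i$. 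The transitional prefix $\run[']{}{0},\dots,\run[']{}{t}$ then extends, since $\chi$ is non-excluding, to a genuine $\chi$-consistent run, still called~$r'$.

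Reading off $(r',t)$ (or $(r',1)$ when $t=0$) is now immediate: every hap recorded in $\run[']{i}{m}$ for $m\leq t$ stems from an event in $\bevents[i]$, never from $\gtrueevents[i]$, and $i$ performs no correct action, so $\trueoccurred[(i,m)]{o}$ fails for all $m\leq t$; and each other agent's local state in~$r'$ is still its event-free initial state, so $\trueoccurred[(j,m)]{o}$ fails for $j\neq i$ too; thus $(\I,r',t)\vDash\lnot\trueoccurred{o}$. Likewise some faulty event — a byzantine event of some $Y_m$, a $\sleep{i}$, or the inserted $\hib{i}$ — has happened to~$i$ by timestamp~$t$, so $(\I,r',t)\vDash\lnot\correct{(i,t)}$, i.e.,~$\lnot\correct{i}$. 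With $\run[']{i}{t}=\run{i}{t}$ this yields $(\I,r,t)\nvDash\K{i}{\trueoccurred{o}}$ and $(\I,r,t)\nvDash\K{i}{\correct{i}}$.

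The main obstacle I anticipate is the transitionality bookkeeping for the adjusted prefix, not the epistemic reading-off. One must check that $\updateag{i}{}{}{}$ applied to the byzantine set $Y_m$ with no actions really reconstructs $\run{i}{m+1}$, which forces a careful case split according to how round~$m$ of~$r$ updates $i$'s history (no update / update triggered only by $go(i)$ or $\sleep{i}$ / update in which $i$ both acts and observes), and one must verify that composing delayability of all $j\neq i$ with gullibility of~$i$ genuinely places each $Y_m$ in $\envprotocol{m}$ while preserving $t$-coherency — clause~(c) on GMIs being the fiddly point. The rest — vacuity of both filter stages because the run is so sparse, and the passage from the transitional prefix to a full consistent run via non-excludingness — is comparatively mechanical.
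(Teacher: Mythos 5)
Your proposal is correct and follows exactly the route the paper indicates: the paper defers the proof of this theorem to~\cite{FroCoS19} with the remark that it is obtained ``by modeling the infamous brain in a vat scenario,'' which is precisely your construction (freeze all $j\neq i$ by delayability, replay $i$'s perceptions as purely byzantine events by gullibility, then read off $\lnot\trueoccurred{o}$ and $\lnot\correct{i}$ at the indistinguishable point). Your handling of the $t=0$ case via $\hib{i}$ and of the filter/coherency bookkeeping is consistent with the framework's definitions, so no gap.
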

These validities  can be shown by modeling the 
infamous  \emph{brain in a vat} scenario (see~\cite{FroCoS19} for details).

Theorem~\ref{lem:no-k-occurred} obviously implies that \emph{knowledge} of simple preconditions, e.g.,~events, is never achievable if byzantine agents are present. Settling for the next best thing, one could investigate whether $i$~knows~$o$ has happened relative to its own correctness, i.e.,~whether $K_i(\correct{i} \to  \trueoccurred{o})$ holds (cf.~\cite{MosSho93AI}), a kind of non-factive~\emph{belief} in~$o$. This means that $i$~can  be mistaken about~$o$ due to its own faults (in which case it cannot rely on any information anyway), not due to being misinformed by other agents. It is, however, sometimes overly restrictive to assume that $K_i(\correct{i} \to  \trueoccurred{o})$ holds in situations when $i$~is, in fact, faulty: typical specifications, e.g.,~for distributed agreement~\cite{lamport1982byzantine}, do not restrict the behavior of  faulty agents, and agents might sometimes learn that they are faulty. We therefore introduced the
\emph{hope} modality
\[
H_i \varphi \ce \correct{i} \to K_i(\correct{i} \to  \varphi),
\]
which was shown in~\cite{Fru19ESSLLI} to be axiomatized  by adding to~{\sf K45} the  axioms
$
\correct{i} \rightarrow (H_{i}\varphi \rightarrow \varphi)$, and
$
\neg \correct{i}\rightarrow H_{i}\varphi$, and
$
H_{i}\correct{i}$.

The following Theorem~\ref{th:byz_cone} shows that hope is also closely connected
to reliable causal cones, in the sense that events an agent can hope for must
lie within the reliable causal cone.
\looseness=-1

\begin{theorem}
\label{th:byz_cone}
    For a non-excluding agent-context  $\chi=((\envprotocol{}, \globalinitialstates, \tau_f,\Psi),\joinprotocol{})$ such that all agents     are gullible, correctable, and delayable, for a correct node $\theta=(i,t)$, and for an event $o \in \events$, if  all occurrences of   $O\in \gtrueevents$ such that  $\mathit{local}(O)=o$ happen outside the reliable causal cone~$\truecausalcone{r}{\theta}$ of a  run $r \in R^\chi$, i.e.,~if $
    O \in \betae{r}{m} \cap \gtrueevents[j] \,\,\&\,\, \mathit{local}(O)=o $ implies $(j,m) \notin\truecausalcone{r}{\theta}$, then  for any  $\I=(R^\chi,\pi)$,
      \[
    (\I,r,t) \nvDash H_i \trueoccurred{o}.
    \] 
        \end{theorem}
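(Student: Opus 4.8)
The plan is to unfold the hope modality and then realise the required counter‑possibility via Lemma~\ref{lem:sources_isolation}. Since $\theta=(i,t)$ is correct in the given run $r\in R^\chi$ we have $(\I,r,t)\models\correct{i}$, so by the definition $H_i\trueoccurred{o}=\correct{i}\to K_i(\correct{i}\to\trueoccurred{o})$ it suffices to produce a $\chi$-consistent run and a timestamp at which agent~$i$'s local state equals $r_i(t)$ while $\correct{i}\wedge\lnot\trueoccurred{o}$ holds. I would take the cone‑equivalent run $r'$ furnished by Lemma~\ref{lem:sources_isolation} for exactly the adjustment~\eqref{eq:Byz_causal_adj}: its hypotheses — gullible, correctable and delayable agents, and a correct node $\theta$ — are precisely those of the present theorem. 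By Statement~\eqref{lem:multipede:trans} the run $r'$ is $\transitionExt{f}{}$-transitional, and it starts in $r'(0)=r(0)\in\globalinitialstates$, hence is weakly $\chi$-consistent; by non‑excludingness its prefix through timestamp~$t$ extends to some $r''\in R^\chi$ with $r''(m)=r'(m)$ for all $m\le t$. Then $r''_i(t)=r'_i(t)=r_i(t)$ by Statement~\eqref{lem:multipede:same:i}, and node $(i,t)$, being correct in~$r$, stays correct in~$r'$ by Statement~\eqref{lem:multipede:nomorefaulty} and hence in~$r''$ (correctness of $(i,t)$ depends only on $r''_\epsilon(t)=r'_\epsilon(t)$); so $(\I,r'',t)\models\correct{i}$.

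The substantive step is to verify $(\I,r'',t)\nvDash\trueoccurred{o}$, i.e.\ that $\trueoccurred[(j,m)]{o}$ fails for every $j\in\agents$ and every $m$ with $1\le m\le t$. Since $o$ is an event (and within an agent events and actions are disjoint), $\trueoccurred[(j,m)]{o}$ would require a \emph{correct} global event $O\in\gtrueevents[j]\cap\betae{r''}{m-1}$ with $\mathit{local}(O)=o$ — a correct action in $\betaag{j}{r''}{m-1}$ is never recorded as an event. Because the first $t+1$ global states of $r''$ and $r'$ coincide, this reduces to ruling out such an $O$ in $\betae[j]{r'}{m-1}$, and here $\betae[j]{r'}{m-1}=\E\rho_j^{m-1}(r)$ by construction. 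I would now split on the position of the node $(j,m-1)$ in the partition $\points=\truecausalcone{r}{\theta}\sqcup\faultbuffer{r}{\theta}\sqcup\silentmasses{r}{\theta}$. If $(j,m-1)\in\silentmasses{r}{\theta}$, then $\newfreezerule{r}=(\varnothing,\varnothing)$, so no event occurs there at all. If $(j,m-1)\in\faultbuffer{r}{\theta}$, then $\newechorule{j}{m-1}{r}$ yields only byzantine events ($\failof{j}$ together with byzantine copies of $j$'s sends), none of which lies in $\gtrueevents[j]$. If $(j,m-1)\in\truecausalcone{r}{\theta}$, then $\newchatterrule{j}{m-1}{\havevoice{r}{\theta}}{}$ only deletes receive events, so $\betae[j]{r'}{m-1}\subseteq\betae[j]{r}{m-1}\subseteq\betae{r}{m-1}$; an $O$ as above would therefore satisfy $O\in\betae{r}{m-1}\cap\gtrueevents[j]$ with $\mathit{local}(O)=o$, which by the hypothesis of the theorem forces $(j,m-1)\notin\truecausalcone{r}{\theta}$ — a contradiction. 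In all cases no correct occurrence of $o$ survives in the first $t$ rounds of $r''$.

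It then remains to assemble the pieces: $(\I,r'',t)\models\correct{i}$ and $(\I,r'',t)\nvDash\trueoccurred{o}$ give $(\I,r'',t)\nvDash\correct{i}\to\trueoccurred{o}$; since $r''_i(t)=r_i(t)$, the semantics of $K_i$ then yields $(\I,r,t)\nvDash K_i(\correct{i}\to\trueoccurred{o})$, and combined with $(\I,r,t)\models\correct{i}$ this gives $(\I,r,t)\nvDash H_i\trueoccurred{o}$. The one genuinely delicate point is the reliable‑causal‑cone case: everything hinges on the chatter intervention introducing \emph{no new correct event} labelled~$o$, which is exactly why Lemma~\ref{lem:sources_isolation} uses $\newchatterrule{}{}{}{}$ (deleting only receives) rather than copying the boundary round verbatim, and why the hypothesis is phrased via the reliable cone of the \emph{given} run~$r$ — the cones of $r$ and $r''$ may well differ, but that is immaterial here, since we only invoke facts about the first $t$ rounds of $r''$, all of which are dictated by the adjustment. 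The rest — checking that the hypotheses of Lemma~\ref{lem:sources_isolation} are met, the non‑excluding extension, and unwinding the definition of hope — is routine.
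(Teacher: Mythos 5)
Your proposal is correct and follows essentially the same route as the paper's own proof: apply the adjustment of Lemma~\ref{lem:sources_isolation}, extend the resulting prefix to a $\chi$-consistent run via non-excludingness, use Statements~\eqref{lem:multipede:same:i} and~\eqref{lem:multipede:nomorefaulty} for indistinguishability and correctness of~$\theta$, and argue case-by-case over the partition into silent masses, fault buffer, and reliable cone that no correct event recorded as~$o$ survives. Your write-up is merely more explicit than the paper's (e.g., on events versus actions in $\trueoccurred[(j,m)]{o}$ and on the chatter intervention deleting only receives), which is fine.
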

    \begin{proof}
    Constructing the first $t$~rounds  according to the adjustment from Lemma~\ref{lem:sources_isolation} and extending this prefix to an infinite run $r'\in R^\chi$ using the non-exclusiveness of~$\chi$, we obtain a run with no correct events recorded as~$o$. Indeed, in~$r'$, there are no events originating from~$\silentmasses{r}{\theta}$, no correct events from~$\faultbuffer{r}{\theta}$, and all events originating from~$\truecausalcone{r}{\theta}$, though correct, were also present in~$r$ and, hence, do not produce~$o$ in  local histories. At the same time, $r_i(t)=r'_i(t)$ by Lemma~\ref{lem:sources_isolation}\eqref{lem:multipede:same:i}, making $(\I,r',t)$ indistinguishable for~$i$, and $(\I,r',t)\vDash \correct{i}$  by Lemma~\ref{lem:sources_isolation}\eqref{lem:multipede:nomorefaulty}. \looseness=-1   \end{proof}

It is interesting to compare the results and proofs of Theorems~\ref{lem:no-k-occurred}~and~\ref{th:byz_cone}. Essentially, in the run~$r'$ modeling  the brain in a vat in the former, $i$~is a faulty agent that perceives events while none really happen. Therefore, $\K{i}{\trueoccurred{o}}$ can \emph{never} 
be attained. In the run~$r'$ constructed by Lemma~\ref{lem:sources_isolation}
in Theorem~\ref{th:byz_cone}, on the other hand, $i$~remains correct. The reason that
$H_i \trueoccurred{o}$ fails here is  that $o$~does not occur within the
reliable causal cone.
\looseness=-1 

Theorem~\ref{th:byz_cone} shows that, in order to act based on the hope that an event occurred, it is necessary that the event originates  from the reliable causal cone. 
Unfortunately,  this is not sufficient. Consider the case of a run~$r$ where no agent exhibits a fault: 
every causal message chain is reliable and the ordinary and reliable causal cones coincide. 
However, since up to $f$~agents \emph{could be} byzantine, it is trivial to modify~$r$ by seeding $\failof{j}$ events in round \textonehalf{} for several agents~$j$ in a way that is indistinguishable for  agent~$i$ trying to hope for the occurrence of~$o$. This would enlarge the fault buffer and shrink the reliable causal cone in the so-constructed adjusted run~$\hat{r}$. 
Obviously, by making different sets of agents byzantine (without violating~$f$, of course), one can fabricate multiple adjusted runs where $\hat{r}_i(t)=r_i(t)$ is exactly the same but fault buffers and reliable causal cones vary in size and shape. Any single one of those~$\hat{r}$ satisfying the conditions of Theorem~\ref{th:byz_cone}, in the sense that all occurrences of~$o$ happen outside its reliable causal cone, dash the \emph{hope} of~$i$ for~$o$ in~$r$. 

Thus, in order for~$i$ to have hope at~$(i,t)$ in run~$r$ that $o$~really occurred, it is necessary that some correct global version~$O$ of~$o$ (not necessarily the same one) is present somewhere (not necessarily at the same node) in the reliable causal cone of \emph{every} run~$\hat{r}$ that ensures $r_i(t)=\hat{r}_i(t)$. This gives rise to the definition of a \emph{multipede}, which ensures
$(\I,r,t) \models H_i \trueoccurred{o}$ according to Theorem~\ref{th:byz_cone}:

\begin{definition}[Multipede]\label{def:multipede} We say that a run~$r$ in a non-excluding
 agent context $\chi=((\envprotocol{}, \globalinitialstates, \tau_f,\Psi),\joinprotocol{})$ contains a $\multipede{o}{\theta}$ for event $o \in \events$ at some node $\theta=(i,t)$ iff, for all  runs~$\hat{r}\in R^\chi$ with $r_i(t)=\hat{r}_i(t)$, it holds that $o$~happens inside its reliable causal cone, i.e.,~that  
 \[(\exists (j,m) \in\truecausalcone{\hat{r}}{\theta})\, (\exists O \in \gtrueevents[j]) \bigl(O \in \beta_{\epsilon}^m(\hat{r})  \,\,\&\,\, \mathit{local}(O)=o\bigr).
 \]
\end{definition}

We obtain the following necessary condition for the existence of a multipede:\looseness=-1
\begin{theorem}[Necessary condition for a multipede]\label{thm:necessarymultipede}
    Given an arbitrary non-excluding agent-context  $\chi=\bigl((\envprotocol{}, \globalinitialstates, \tau_f,\Psi),\joinprotocol{}\bigr)$ such that all agents     are gullible, correctable, and delayable and for any run $r \in R^\chi$ in any interpreted system $\I=(R^\chi,\pi)$, if    \looseness=-1
    $
    (\I,r,t) \vDash H_i \trueoccurred{o} 
    $
    for a correct node $\theta=(i,t)$, i.e., if there is a $\multipede{o}{\theta}$ in $r$, then the following must hold:  Let $\mathit{Byz}^r_\theta \ce \{j \in \agents \mid (\exists m)(j,m) \in \faultbuffer{r}{\theta}\}$. For any $S \subseteq \agents \setminus(\{i\}\sqcup\mathit{Byz}^r_\theta)$ such that $|S|= f -|\mathit{Byz}^r_\theta|$, 
    there must exist a  witness $w_S\in\agents$ of some correct event $O_S \in \betae{r}{m_S} \cap \gtrueevents[w_S]$ such that  $\mathit{local}(O_S)=o$  and such that there is  causal path $\pastpath[\xi_S]{r}{(w_S,m_S)}{\theta}$ that does not involve agents from $S \sqcup \mathit{Byz}^r_\theta$. 
\looseness=-1
    \end{theorem}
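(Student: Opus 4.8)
The plan is to argue by contraposition. Suppose the conclusion fails: there is an admissible~$S$ — note that $r$ is $\transitionExt{f}{}$-transitional, so at most~$f$ agents are ever byzantine in~$r$, whence $|\mathit{Byz}^r_\theta|\leq f$, the cardinality condition is meaningful, $\mathcal B\ce S\sqcup\mathit{Byz}^r_\theta$ is a disjoint union with $|\mathcal B|=f$, and $i\notin\mathcal B$ — such that for every correct $O\in\betae{r}{m}\cap\gtrueevents[j]$ with $\mathit{local}(O)=o$, \emph{every} causal $r$-path from $(j,m)$ to~$\theta$ passes through an agent of~$\mathcal B$. By hypothesis ($(\I,r,t)\models H_i\trueoccurred{o}$, equivalently $r$ contains a $\multipede{o}{\theta}$), Def.~\ref{def:multipede} tells us that \emph{every} run $\hat r\in R^\chi$ with $\hat r_i(t)=r_i(t)$ has $o$ occurring inside $\truecausalcone{\hat r}{\theta}$; so it suffices to build one such~$\hat r$ violating this.

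To build~$\hat r$, I would partition $\points$ relative to~$\theta$ using the \emph{enlarged} byzantine set~$\mathcal B$ in place of the one induced by the actual faults of~$r$: the \emph{reduced reliable cone}~$C$ of all nodes from which some reliable causal $r$-path to~$\theta$ avoids every agent of~$\mathcal B$; the \emph{enlarged fault buffer} $D\ce\{(j,m)\mid m<t,\ \pastpath{r}{(j,m)}{\theta},\ j\in\mathcal B\}$; and the remaining silent masses. Then $\hat r$ is obtained from~$r$ (Def.~\ref{def:build_from}) by the adjustment $[B_{t-1};\dots;B_0]$ whose $j$-interventions are $\newchatterrule{j}{m}{W}{}$ with $W\ce C\sqcup D$ at nodes $(j,m)\in C$, $\newechorule{j}{m}{}$ at nodes $(j,m)\in D$, and $\newfreezerule{}$ on the silent masses (Def.~\ref{def:ifakerule}), extended to an infinite run in~$R^\chi$ via non-excludingness. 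Since $i\notin\mathcal B$ and every $(i,m)$ with $m\leq t$ is correct in~$r$ (because $\theta$ is), all of~$i$'s nodes up to~$t$ lie in~$C$, so the chatter intervention faithfully replays~$i$'s column.

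The hard part will be proving that $\hat r$ is $\transitionExt{f}{}$-transitional, that $\hat r_i(t)=r_i(t)$, and that no more than~$f$ agents become byzantine. This is essentially a re-run of the proof of Lemma~\ref{lem:sources_isolation}, with the nodes of~$D$, now including the freshly byzantine agents of~$S$ (each of whose $\newechorule{j}{m}{}$ intervention contributes $\failof{j}$), playing the role of the original fault-buffer nodes and~$C$ the role of $\truecausalcone{r}{\theta}$. Two points need extra care: first, an agent of~$S$ that in~$r$ correctly forwarded messages towards~$i$ must, in~$\hat r$, re-emit exactly those messages — with identical GMIs — in byzantine form, which is precisely what $\newechorule{}{}{}$ is designed for and precisely how the fault-buffer step of Lemma~\ref{lem:sources_isolation} is discharged, so that~$i$'s local state and every $C$-node's incoming information remain untouched; second, the byzantine-count bound, which now uses $|\mathcal B|=f$ together with the observation that the only agents receiving byzantine events by time~$t$ in~$\hat r$ lie in~$\mathcal B$ (chatter and freeze produce none, echo only for~$\mathcal B$), so in particular $(i,t)$ is correct in~$\hat r$ and the $\filtere[\leq f]{}{}$ stage is never binding; the $\leq f$ bound and $\hat r_i(t)=r_i(t)$ then follow as in clauses~(D)--(E) and~(A)--(B) of Lemma~\ref{lem:sources_isolation}, the latter via Lemma~\ref{lem:agree_on} along~$i$'s column.

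Finally I would derive the contradiction. Applying Def.~\ref{def:multipede} to~$\hat r$ yields $(j,m)\in\truecausalcone{\hat r}{\theta}$, an $O\in\betae{\hat r}{m}\cap\gtrueevents[j]$ with $\mathit{local}(O)=o$, and a reliable causal $\hat r$-path $\xi=\langle\theta_0,\dots,\theta_k\rangle$ with $\theta_0=(j,m)$ and $\theta_k=\theta$. A backward induction along~$\xi$ (base $\theta_k=\theta\in C$) shows every $\theta_l\in C$: a local link keeps us in~$C$ because $C$ is downward-closed in time for each agent; a message link into a $C$-node forces its sending node into $W=C\sqcup D$, since $\newchatterrule{}{}{}{}$ drops incoming messages according to the origin encoded in their GMI and GMIs are unique (injectivity of~$id$ and~$\mathit{global}$, plus $t$-coherency); and~$D$ is excluded, because a proper intermediate node of a \emph{reliable} path cannot have a byzantine successor, while $\theta_0$ cannot lie in~$D$ either, as at a $D$-node $\betae{\hat r}{m}$ contains only $\failof{j}$ and byzantine sends, hence no correct~$O$. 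Since $\hat r$ agrees with~$r$ on~$C$ and $C$-agents avoid~$\mathcal B$, the path~$\xi$ is a reliable causal $r$-path from $(j,m)$ to~$\theta$ avoiding $\mathcal B=S\sqcup\mathit{Byz}^r_\theta$, and $O\in\betae{r}{m}\cap\gtrueevents[j]$. Taking $w_S\ce j$, $m_S\ce m$, $O_S\ce O$, $\xi_S\ce\xi$ contradicts the failure of the conclusion assumed at the outset. (Alternatively, one can invoke Theorem~\ref{th:byz_cone} on~$\hat r$ and transfer $H_i\trueoccurred{o}$ back to~$\hat r$ via $\hat r_i(t)=r_i(t)$ and the correctness of~$\theta$ in both runs.)
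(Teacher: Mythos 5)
Your proof is correct in outline and shares the paper's core strategy---construct a run indistinguishable for~$i$ at~$\theta$ in which the agents of~$S$ are byzantine from the start, then invoke Theorem~\ref{th:byz_cone} (equivalently, Definition~\ref{def:multipede}) on that run and transfer the witness back to~$r$---but you implement the construction differently, and at a much higher cost. The paper does not repartition the nodes at all: it takes the run~$r'$ already delivered by Lemma~\ref{lem:sources_isolation} for the \emph{original} partition, seeds $\failof{j}$ for every $j\in S\sqcup\mathit{Byz}^r_\theta$ into $\betae{r'}{0}$, and observes that this single change is imperceptible to agents, leaves protocols and filters untouched (the $\leq f$ stage because $|S|+|\mathit{Byz}^r_\theta|=f$), and shrinks the reliable causal cone of the resulting~$r^S$ to exactly the nodes reachable by causal paths avoiding $S\sqcup\mathit{Byz}^r_\theta$; transitionality and $r^S_i(t)=r_i(t)$ are then inherited from Lemma~\ref{lem:sources_isolation} with almost no extra work. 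You instead rebuild the whole adjustment around the enlarged byzantine set~$\mathcal B$, with a reduced cone~$C$ and enlarged buffer~$D$, which obliges you to re-verify the entire transitionality induction of Lemma~\ref{lem:sources_isolation} for the new partition---the hardest argument in the paper---and you only assert that this re-run goes through. The assertion is in fact defensible: the analog of~\eqref{eq:causal_past_of_well_connected} and the correctness of immediate futures inside~$C$ do carry over, essentially because any node with a causal path to~$\theta$ whose successor is faulty in~$r$ already has its agent in $\mathit{Byz}^r_\theta\subseteq\mathcal B$; and your explicit backward induction placing the witness path inside~$C$ is a concrete substitute for the paper's brief ``any such event and causal path already exist in~$r'$ and~$r$.'' Still, you should either carry out that re-verification in full or, better, notice that seeding $\failof{j}$ into the already-constructed~$r'$ achieves the same effect for free.
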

    \begin{proof}
    Since, by Lemma~\ref{lem:sources_isolation}, the adjusted run $r'\in R^\chi$ and since the only faults up to~$t$ occur in~$r'$ in the fault buffer~$\faultbuffer{r}{\theta}$, i.e.,~pertain to agents from  $\mathit{Byz}^r_\theta$, for any~$S$ described above, one can construct first $t$~rounds  by setting $\beta^0_\epsilon(r^S) \ce  \beta^0_\epsilon(r') \sqcup \{\failof{j}\mid j \in S \sqcup \mathit{Byz}^r_\theta\}$ and keeping the rest of $r'$~intact. These first $t$~rounds  can be extended to complete infinite  runs $r^S\in R^\chi$ indistinguishable for~$i$ at~$\theta$ from either~$r'$~or~$r$ because the addition of $\failof{j}$ is imperceptible for agents and does not affect protocols. The only potentially affected element could have been $\filtere{}{}$ in the part ensuring byzantine agents do not exceed~$f$ in number, but it also behaves the same way as in~$r'$ because $|S| + |\mathit{Byz}^r_\theta|= f$.
Since $r_i(t)=r'_i(t)=r^S_i(t)$, we have $(\I,r^S,t) \vDash H_i \trueoccurred{o}$. Node~$\theta$ remains correct in these runs because $i \notin S$. 
Thus, by Theorem~\ref{th:byz_cone}, each run~$r^S$ must have a requisite correct event $O_S \in \beta^{m_S}_{\epsilon_{w_S}}(r^S)\cap \truecausalcone{r^S}{\theta}$. It remains to note that any such correct event from~$r_S$ must be present in~$r'$ and in~$r$ and any causal path in~$r^S$ exists already in~$r'$~and~$r$, according to the construction from Lemma~\ref{lem:sources_isolation}. 
Thus,  there must exist  a  causal path~$\xi$ in~$r$ from $(w_S,m_S)$  to~$\theta$ such that $\xi$~is reliable in~$r^S$. Finally, since all $f$~byzantine agents in~$r^S$, namely $S \sqcup \mathit{Byz}^r_\theta$, are made faulty from round~\textonehalf, path~$\xi_S$ being reliable in~$r^S$ means not involving these agents.\looseness=-1   \end{proof}

From the perspective of protocol design, arguably, of more interest are sufficient conditions for the existence of a multipede in a given
run. Whereas a sufficient condition could be obtained directly from 
Def.~\ref{def:multipede}, of course, identifying all the 
transitional runs~$\hat{r}$ with $r_i(t)=\hat{r}_i(t)$ is far 
from being computable in general. Actually, we conjecture that 
sufficient conditions cannot be formulated in a protocol-independent 
way at all. Unfortunately, however, protocol-dependence cannot be expected to
be simple either. For instance, even just varying the number 
and location of faults in~$r$ for suppressing $\trueoccurred{o}$ 
in a modified run~$\hat{r}$ could be non-trivial.
If $k$~agents are already faulty in run~$r$, at least $f-k$ ones can freely be used for this purpose. However, some of the $k$~byzantine faults in~$r$ may also be re-located in~$\hat{r}$, as agents that only become faulty after timestamp~$t$ cannot be part of any fault buffer. Rather than making them faulty, it would suffice to just freeze them. 
\looseness=-1

For instance, for the following communication structure with $f=2$ and agents~$1$~and~$2$ being byzantine (we omit the time dimension for simplicity's sake), \looseness=-1 
\[
\begin{tikzcd}[cells={nodes={draw=gray}}]
1 \arrow{r} & 2 \arrow{r} & 3 & 4 \arrow{l}
\end{tikzcd}
\]
they would both participate in the fault buffer, whereas, already 2~alone would suffice because even were 1~correct, the observed communication does not give it a chance to pass by~2. Depending on 1's~protocol, it might be possible to reassign~1 to the silent masses, thereby allowing to consider~4 as the second faulty agent and, thus, showing the impossibility for~3 to act in this situation.
An opposite outcome is possible in the following scenario:\looseness=-1
\[
\begin{tikzcd}[cells={nodes={draw=gray}}]
& A1.1 \arrow{rd} & & A2.1\arrow{ld}
\\
I1 \arrow{r}\arrow{ru}\arrow{rd} & A1.2 \arrow{r} & C &  A2.2\arrow{l} & I2\arrow{l}\arrow{lu}\arrow{ld}
\\
& A1.3\arrow{ru} & & A2.3\arrow{lu}
\end{tikzcd}
\]
Let $f=2$ and the faulty agents be~$A2.1$~and~$A1.1$. While the sufficient condition forces~$C$ to consider the case of both~$I1$~and~$I2$ being compromised and information originating from them unreliable, our necessary condition does not rule out $C$'s~ability to make a decision. Indeed, suppose~$I1$~and~$I2$ are investigators sending in their reports via three aides each. Having received 4~identical reports that are correct from~$A1.2$, $A1.3$, $A2.2$,~and~$A2.3$ and only 2~fake reports from~$A1.1$~and~$A2.1$, agent~$C$ would have been able to choose the correct version if the possibility of both investigators being compromised were off the table. Our method of adjusting the run does not allow us to move the faulty agent from~$A1.1$ to~$I1$ because it is not clear how $A1.1$~would have behaved were it correct and had it received a fake report from~$I1$. By designing a protocol in such a way that $A1.1$'s~correct behavior in such a hypothetical situation is different, we can eliminate the possibility of investigators being compromised and, thus, resolve the situation for~$C$.
\looseness=-1 

\section{Conclusions}
\label{sect:conclusions}

The main contribution of this paper is the characterization of the analog of Lamport's causal cone in asynchronous multi-agent systems with byzantine faulty agents. Relying on our novel byzantine runs-and-systems framework, we provided an accurate epistemic characterization of causality and the induced reliable causal cone in the presence of asynchronous byzantine agents. Despite the quite natural final shape of a reliable causal cone, it does not lead to simple conditions for ascertaining preconditions: the detection of what we called a multipede is considerably more complex than the verification of the existence of one causal path in the fault-free case. Since the agents' actions depend on the shape of multiple alternative reliable causal cones in byzantine fault-tolerant protocols like~\cite{ST87}, however, there is no alternative but to detect multipedes.
\looseness=-1 

Developing practical sufficient conditions for the existence of a multipede poses exciting challenges, which are currently being addressed in the context of the epistemic analysis of some real byzantine fault-tolerant protocols. This context-dependency is unavoidable, since the agent that tries to detect a multipede in a run lacks global information such as the actual members of the fault buffer.
On the other hand, the gap between the necessary and sufficient conditions can potentially be minimized by designing protocols based on the insights into the causality structure we have uncovered. For instance, while we treated all error-creating nodes as part of the fault buffer in our necessary conditions for a multipede, it is sometimes possible to relegate redundant parts of it into the silent masses. As this would allow to re-locate byzantine faults for intercepting more causal paths, one may design protocols in a way that does not allow this.\looseness=-1

A larger and more long-term goal is to extend our study to syncausality and the reliable
syncausal cone in the context of synchronous byzantine fault-tolerant multi-agent systems, and to possibly incorporate protocols explicitly into the logic. \looseness=-1
\medskip

\textbf{Acknowledgments.} We are grateful to Yoram Moses, Hans van Ditmarsch, and Moshe Vardi for fruitful discussions and valuable suggestions that have helped shape the final version of this paper. We also thank the anonymous reviewers for their comments and suggestions.

\bibliographystyle{eptcs}
\bibliography{shortbibdoi}
\appendix
\section*{Appendix}
\label{app:app}
\renewcommand{\thetheorem}{A.\arabic{theorem}}

\subsection*{Filter functions}
\begin{definition}
\label{def:filter_env}
The filtering function $\filtere{}{}{}$ for asynchronous agents with at most $f\geq 0$ byzantine faults is defined as follows. 

First, we define a subfilter 
$
\filtere[B]{}{}{}{} \colon \globalstates \times \wp(\gevents) \times \prod_{i=1}^n\wp(\gactions[i]) \to \wp(\gevents)
$
 that removes impossible receives: for a global state $h\in\globalstates$, set  $X_\epsilon \subseteq \gevents$, and sets $X_i \subseteq \gactions[i]$,
   \begin{multline*}
    \filtere[B]{h}{X_\epsilon, X_1, \dots, X_n}  \ce 
    X_\epsilon \setminus 
    \Bigl\{
    \grecv{j}{i}{\mu}{id}  \quad\Big|\quad 
    \gsend{i}{j}{\mu}{id} \notin h_\epsilon \quad\land\quad \\
    (\forall A \in \{\tick\}\sqcup\gactions[i])\,\fakeof{i}{\mistakefor{\gsend{i}{j}{\mu}{id}}{A}} \notin h_\epsilon \land
    (\gsend{i}{j}{\mu}{id} \notin X_i \lor go(i) \notin X_\epsilon) \land 
    \\ 
    (\forall A \in \{\tick\}\sqcup\gactions[i])\,\fakeof{i}{\mistakefor{\gsend{i}{j}{\mu}{id}}{A}} \notin X_\epsilon
    \Bigr\} ,
    \end{multline*}
    where $h_\epsilon$ is the environment's record of all haps in the global state $h$ and $O\in h_\epsilon$ ($O \notin h_\epsilon$) states that  the hap $O \in \ghaps$ is (isn't) present in this record of all past rounds, $X_\epsilon$ represents all events attempted by the environment and $X_i$'s represent all actions attempted by agents $i$ in the current round. 

Second, using  $X^B_{\epsilon_i} \ce X_\epsilon \cap \bigl(\bevents[i] \sqcup\{\sleep{i},\hib{i}\}\bigr)
$ and defining $\agents(\failed{h}{})$ to be the set of agents who have already exhibited  faulty behavior in the global state $h$, we define a subfilter
$
\filtere[\leq f]{}{}{}{} \colon  \globalstates \times \wp(\gevents) \times \prod_{i=1}^n\wp(\gactions[i]) \to \wp(\gevents)
$
 that removes all byzantine events in the situation when having them would have exceeded the $f$ threshold:
 \[
\filtere[\leq f]{h}{\,X_\epsilon,\, X_1,\, \dots,\, X_n}
\ce 
\begin{cases}
X_\epsilon & \text{if $\Bigl|\agents(\failed{h}{})\cup\left\{i \mid X^B_{\epsilon_i} \ne \varnothing\right\}\Bigr|\leq f$},
\\
X_\epsilon \setminus \bigsqcup\limits_{i\in \agents}X^B_{\epsilon_i} & \text{otherwise}.
\end{cases}
\]

The filter 
$
\filtere{}{}{} \colon  \globalstates \times \wp(\gevents) \times \prod_{i=1}^n\wp(\gactions[i]) \to \wp(\gevents)
$ 
is obtained by composing these two subfilters, with the $\leq f$ subfilter applied first:
\[
\filtere{h}{\,X_\epsilon,\, X_1,\, \dots,\, X_n} \ce
\filtere[B]{h}{\filtere[\leq f]{h}{\,X_\epsilon,\, X_1,\, \dots,\, X_n},\, X_1,\, \dots,\, X_n } .
\]
\end{definition}

The composition in the opposite order could violate causality if a message receipt is preserved by $\filtere[B]{}{}$ based on a byzantine send in the same round, which is later removed by $\filtere[\leq f]{}{}$.

\begin{definition}
\label{def:filter_ag}
The filters 
$
\filterag{i}{}{} \colon  \prod_{j=1}^n\wp(\gactions[j]) \times \wp(\gevents) \to \wp(\gactions[i])
$ 
for agents' actions are defined as follows: for $X_\epsilon$ representing all environment's events  and $X_i$ representing all actions attempted by agent $i$ in the current round,
    \begin{equation*}
  \filterag{i}{X_1,\dots,X_n}{X_\epsilon}\ce\begin{cases}
    X_i & \text{if } go(i) \in X_\epsilon,\\
        \varnothing & \text{otherwise}.
    \end{cases}
    \end{equation*}
\end{definition}

\subsection*{Update functions}

Before defining the update functions, we need several auxiliary functions:
\begin{definition}
We use  a function
 $
 \mathit{local} \colon  \gtruethings \to  \haps
 $
 converting \emph{correct} haps from the global format into the local formats for the respective agents in such a way that, for any $i,j \in \agents$, any $t \in \bbbt$,  any $a \in \actions[i]$, any $\mu \in \msgs$, and any $M \in \bbbn$:
 \begin{center}
 \begin{tabular}{rl@{\qquad\qquad}rl}
 1. & $\mathit{local}(\gtrueactions[i]) = \actions[i]$; & 3.  &
 $\mathit{local}\bigl(\glob{i,t}{a}\bigr) = a$;
 \\
2. &  $\mathit{local}(\gtrueevents[i]) = \events[i]$; &  4. &
 $\mathit{local}{\bigl(\grecv{i}{j}{\mu}{M}\bigr)} = \recv{j}{\mu}$.
 \end{tabular}
 \end{center}
For all other haps, the localization cannot be done on a hap-by-hap basis because system events and byzantine events
$\fakeof{i}{\mistakefor{A}{\tick}}$  do not create a local record. Accordingly,
    we define a \emph{localization function}
    $
    \sigmaof{} \colon \wp(\gthings)  \to  \wp(\truethings)
    $ 
    as follows: for each $X \subseteq \gthings$,  
    \begin{multline*}
    \sigmaof{X} \ce \mathit{local}\Bigl(\mathstrut^{\mathstrut}(X \cap \gtruethings) \quad\cup \\
 \{E \in \gtrueevents \mid (\exists i)\,\fakeof{i}{E}\in X\}  \cup 
 \{A' \in \gtrueactions\mid (\exists i)(\exists A)\,\fakeof{i}{\mistakefor{A}{A'}}\in X\} \Bigr).
 \end{multline*}
\end{definition}

\begin{definition}
\label{def:state-update}
        \index{${\updatee{}{},\updateag{i}{}{}{}}$}
        We abbreviate 
        $X_{\epsilon_i} \ce X_{\epsilon} \cap \gevents[i]$ for  performed events $X_\epsilon \subseteq \gevents$ and actions $X_i \subseteq \gtrueactions[i]$ for each $i \in \agents$.            Given a global state  
           $
           \run{}{t} = \bigl(\run{\epsilon}{t}, \run{1}{t}, \dots, \run{n}{t}\bigr) \in \globalstates
           $, we define 
               agent~$i$'s
               $
               \updateag{i}{}{}{} \colon \localstates{i} \times \wp(\gtrueactions[i])\times\wp(\gevents) \to \localstates{i} 
               $  
            that   outputs a new local state from~$\localstates{i}$ based on $i$'s actions~$X_i$ and events~$X_{\epsilon}$:
               \[
                   \updateag{i}{\run{i}{t}}{X_i}{X_\epsilon} \ce 
                   \begin{cases}
                   \run{i}{t} & 
                       \text{if $\sigma(X_{\epsilon_i})=\varnothing$ and $X_{\epsilon_i} \cap \{go(i), sleep(i)\} = \varnothing$}, \\
                       \left[\sigmaof{ X_{\epsilon_i}\sqcup X_i} \right] : \run{i}{t} & \text{ otherwise }
                   \end{cases}
               \]
               (note that in transitional runs, $\updateag{i}{}{}{}{}$ is always used after the  action  $\filterag{i}{}{}$, thus, in the absence of $go(i)$, it is always the case that  $X_i = \varnothing$).
   
        Similarly, the \emph{environment's state}
        $
        \updatee{}{} \colon \localstates{\epsilon} \times \wp\left(\gevents\right) \times \prod_{i=1}^n\wp(\gactions[i])  \to \localstates{\epsilon}
        $
         outputs a new state of the environment based on events  $X_{\epsilon}$ and all actions $X_i$:      
        \[
        \updatee{\run{\epsilon}{t}}{X_{\epsilon},X_1,\dots, X_n} \ce (X_{\epsilon} \sqcup X_1 \sqcup \dots \sqcup X_n) \colon \run{\epsilon}{t} 
        .\] 
        Accordingly, the global update function
                $
        \update{}{} \colon \globalstates \times \wp\left(\gevents\right) \times \prod_{i=1}^n\wp(\gactions[j])  \to \globalstates
        $
modifies
         the global state  as follows:
        \begin{multline*}
        \update{\run{}{t}}{X_{\epsilon},X_1,\dots, X_n} \ce \Bigl( \updatee{\run{\epsilon}{t}}{X_{\epsilon},X_1,\dots, X_n}, \\\updateag{1}{\run{1}{t}}{X_1}{X_\epsilon}, \dots, \updateag{n}{\run{n}{t}}{X_n}{X_\epsilon} \Bigr)
        .
        \end{multline*}
\end{definition}

\end{document}